\documentclass[journal]{IEEEtran}
\IEEEoverridecommandlockouts
\usepackage{cite}
\usepackage{amsmath,amssymb,amsfonts}
\usepackage{amsthm}

\usepackage{algorithm}
\usepackage{algorithmic}
\usepackage{graphicx}
\usepackage{textcomp}
\usepackage{xcolor}
\usepackage{comment}
\newtheorem{assumption}{Assumption}
\newtheorem{lemma}{Lemma}
\newtheorem{remark}{Remark}
\usepackage{amsthm}
\newtheorem{theorem}{Theorem}
\def\BibTeX{{\rm B\kern-.05em{\sc i\kern-.025em b}\kern-.08em
    T\kern-.1667em\lower.7ex\hbox{E}\kern-.125emX}}
\begin{document}

\title{Data-Driven Time-Varying Control Barrier Functions for Adaptive Safe-Set Learning with Online Decremental Support Vector Machines\\
}
\author{
Shawon Dey, Michael Budihartono, and Hever Moncayo%
\thanks{

Shawon Dey, Michael Budihartono, and  Hever Moncayo are with the Department of Aerospace Engineering, Embry-Riddle Aeronautical University, Daytona Beach, FL 32114 USA. 
(e-mail: shawon.dey.apee@gmail.com, budiharm@my.erau.edu, moncayoh@erau.edu).}
\thanks{This work is currently being supported by the NASA University Leadership Initiative under grant No. 80NSSC25M7104.}
}
\maketitle

\begin{abstract}
Mission-critical intelligent systems often operate under time-varying limitations that reduce control authority and change the admissible safe operating envelope. In such settings, a safety certificate learned under nominal conditions may become invalid as system capability changes. To address this challenge, this paper proposes a degradation-aware, data-driven safety-filtering framework that learns a safe set from data, updates it online, and enforces the resulting learned barrier through a time-varying control barrier function (CBF). A nominal safe envelope is first learned from operational data using a radial basis function (RBF)-kernel support vector machine (SVM), whose decision function serves as the initial CBF candidate. To capture capability-induced safe-set contraction, a continuous-time decremental SVM update law is developed so that selected support-vector coefficients are reduced according to a degradation signal. A homotopy-smoothed SVM-CBF is then introduced to avoid discontinuous changes in the learned barrier during active-set transitions. The resulting time-varying learned barrier is enforced using a quadratic-program-based safety filter under degraded input constraints. Forward invariance of the learned time-varying safe set and recursive feasibility of the safety filter are established. Simulation results on a vertical takeoff and landing (VTOL) model show that the proposed method maintains safety under reduced control authority and avoids abrupt barrier-switching effects during safe-set contraction.
\end{abstract}
\begin{IEEEkeywords}
Data-Driven Control, Health Monitoring System, Support Vector Machine, Control Barrier Function, Quadratic Programming
\end{IEEEkeywords}
\section{Introduction}
\begin{figure}[htb]
    \begin{center}
        \includegraphics[width=1\linewidth]{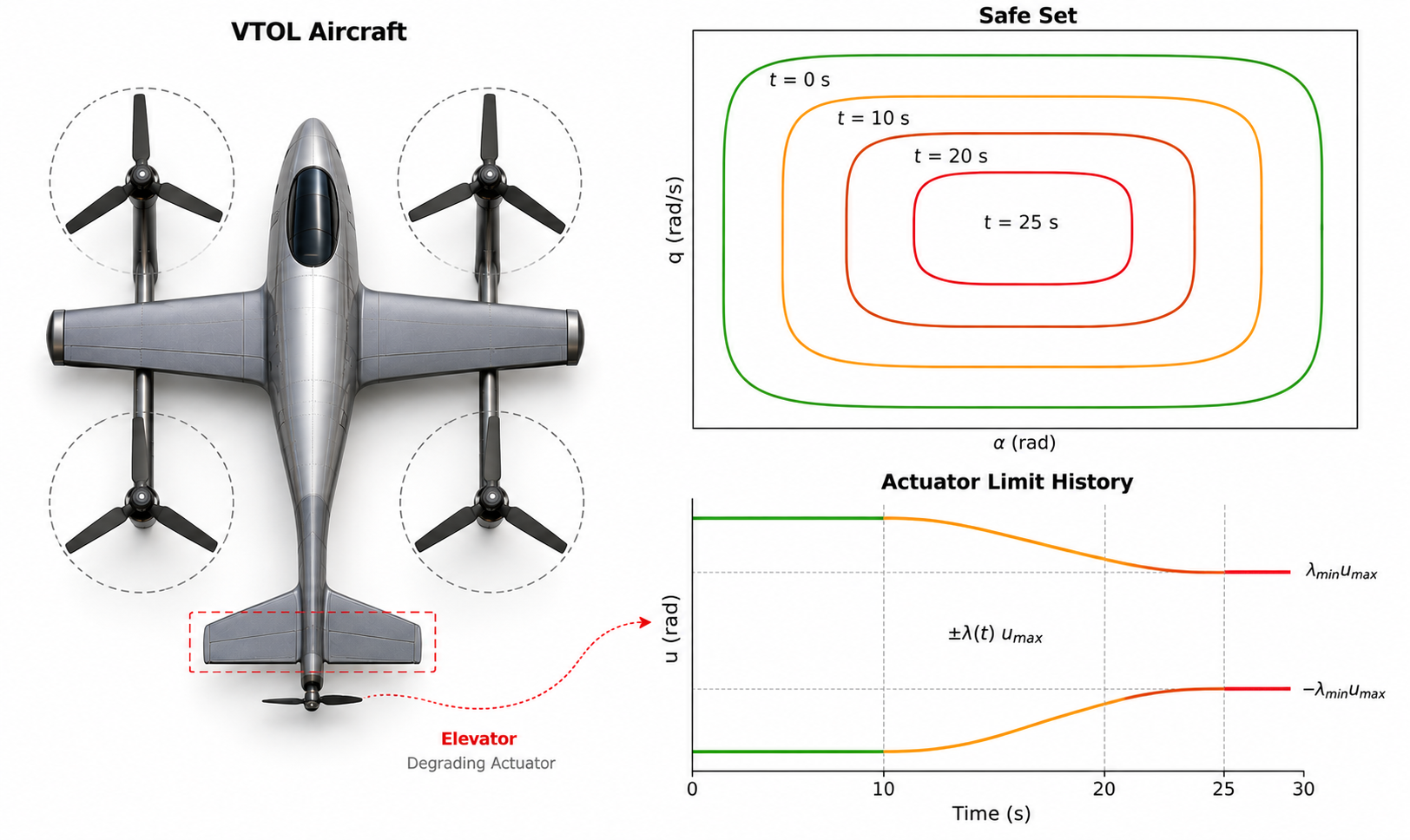}
    \end{center}
    \caption{Safe-set contraction under elevator actuator degradation. As the available elevator authority decreases, the admissible safe envelope in the pitch-state space contracts accordingly.}
    \label{fig:intro1}
\end{figure}
Safety-critical systems, such as advanced aerial vehicles \cite{namuduri2022advanced, paul2023formal}, multirotor \cite{zhou2021safety}, and robotic systems \cite{xiao2023barriernet} operating in harsh environments, require safety guarantees that remain reliable when system capability and operating conditions evolve. This need is becoming increasingly important with the rapid development of autonomy across air, ground, and space domains for applications such as urban transportation, emergency response, infrastructure inspection, delivery, surveillance, exploration, and defense \cite{wei2024autonomous}. In these domains, safety requires the vehicle to remain within its admissible operational envelope throughout the mission. Exceeding this envelope can degrade operational performance, and potentially lead to unsafe or unrecoverable behavior. In real-world scenarios, however, actuator effectiveness or sensor performance might not always remain nominal. For example, actuator or sensor degradation, external disturbances, environmental changes, and moving obstacles could reduce the effective control authority or alter the admissible operating conditions. Consequently, the safe operating envelope evolves rather than remaining fixed, as shown in Fig.~\ref{fig:intro1}. As a result, a state regime that is safe under nominal conditions becomes unsafe under degraded or uncertain operating conditions, making static safety certificates increasingly inadequate for safety-critical autonomous operation \cite{zhang2021safe, dey2025online, sun2024safety}. 

To address these challenges, control barrier functions (CBFs) \cite{Ames2017CBFQP,ames2019control} have recently received attention as an effective tool for enforcing safety. By enforcing forward invariance of a prescribed safe set \cite{nagumo1942lage}, CBFs enable real-time safety filters that minimally modify nominal control inputs while maintaining safety \cite{wang2025safe, dacs2025robust}. Nevertheless, most existing CBF-based safety rely on barrier functions designed offline from known models, predetermined fixed constraints, or analytically specified safe sets \cite{molnar2025collision}. These constructions are often insufficient when the safe operating envelope is unknown in closed form or changes during operation due to dynamically-changing operational environments. In such cases, the barrier function must be updated online using operational data, motivating data-driven time-varying CBF construction. However, incorporating new data can introduce abrupt changes in the learned decision boundary, which can weaken the regularity required for forward-invariance guarantees and real-time CBF implementation. 

Recent studies have expanded the CBF framework in several directions to address uncertainty, learning, and time-varying safety requirements. Adaptive CBFs have been developed to address parametric uncertainty and time-varying control bounds through modified barrier conditions \cite{Xiao2021AdaCBF}. However, these methods typically require a prescribed barrier function and do not address how the safe operating envelope is learned from data and updated as system capability changes. In parallel, learning-based CBF methods have been introduced to reduce the dependence on analytically designed barrier functions. For example, supervised learning and expert-demonstration-based methods synthesize barrier functions from safe trajectories and data-dependent certificate conditions \cite{srinivasan2020synthesis, robey2020learning}. Besides, more recent neural CBF approaches further improve scalability by learning policy-value-function-based safety filters for high-dimensional and input-constrained robotic systems \cite{so2024train}.

Nevertheless, the learning-based CBF methods focus on learning a fixed barrier offline and do not address how the learned safe set evolves online under changing, degraded conditions. Time-varying and compositional barrier methods have begun to address online safe-set variation, with Safari and Hoagg considering smooth switching among perception-based local CBFs in unmapped and dynamic environments \cite{safari2025time}. This work highlights the importance of smooth barrier updates for preserving the regularity needed for forward-invariance guarantees. However, the present setting requires more than smooth barrier switching, since the learned safe envelope must evolve consistently with the time-varying admissible input set. As control authority decreases, the safe set must contract so that safety remains enforceable under the remaining inputs. 

This motivates a novel degradation-scheduled decremental support vector machine (SVM) framework that adapts the learned safe envelope online as a data-driven representation of safety boundaries, while preserving the regularity required for a time-varying CBF-QP safety filter. In our preliminary work, we developed an Health Monitoring System (HMS)-informed CBF safety filter that learns a nominal flight envelope from operational data and interprets the resulting Radial Basis Function (RBF)-SVM decision boundary as a CBF candidate \cite{budihartono2026health}. However, the resulting learned envelope is still essentially static once training is complete. In this paper, we propose an online SVM update mechanism that can modify the learned boundary without retraining the classifier from scratch. Incremental and decremental SVM learning provides such a mechanism by adding or removing selected data points while preserving the Karush--Kuhn--Tucker (KKT) conditions required for maintaining the optimality of the SVM solution as the training set is updated online. \cite{cauwenberghs2000incremental}. For safety control under reduced authority, decremental updates reduce selected support-vector influence so that the nominal learned envelope contracts with the available control authority. Despite this potential, directly applying incremental or decremental SVM updates in a safety-critical control loop is not sufficient. Standard SVM maintenance algorithms update support-vector coefficients through finite steps and may change the active set abruptly. As a result, the learned decision boundary can exhibit discontinuous changes over time. Such behavior is undesirable for CBF-based safety filters because the time-varying CBF condition requires a well-defined barrier value, spatial gradient, and time derivative. Therefore, online SVM adaptation for safety-critical control requires a continuous-time boundary-evolution law that preserves the regularity needed for forward-invariance analysis and real-time Quadratic Programming (QP) implementation.

To address this gap, this paper develops a degradation-aware time-varying SVM-CBF framework. The proposed method replaces discrete decremental SVM updates with a continuous degradation-driven law, allowing selected SVM coefficients to evolve with a degradation signal that captures safe-envelope contraction. On fixed-active-set intervals, SVM KKT sensitivities update the bias and margin coefficients, giving an explicit time derivative of the learned decision function. To regularize transitions between consecutive active-set representations, a homotopy-smoothed SVM-CBF is introduced to ensure continuous barrier evolution across active-set switches. The resulting time-varying SVM-CBF-QP safety filter therefore accounts for both the time-varying learned safe set and the degraded polytopic input constraint, providing a unified mechanism for enforcing safety under changing actuation capability. The major contributions are summarized as follows.
\begin{itemize}
    \item A data-driven nominal safe envelope is learned using an RBF-kernel SVM, whose decision function serves as the initial barrier candidate.
    \item An online continuous-time decremental SVM update law is developed to adapt support-vector coefficients under degradation while preserving differentiability of the learned safe boundary.
    \item A homotopy-smoothed time-varying SVM-CBF is formulated to regularize active-set transitions and preserve the differentiability of the learned barrier during online boundary updates.
    \item A degradation-aware time-varying SVM-CBF-QP safety filter is designed to minimally modify the nominal input while enforcing the homotopy-smoothed learned barrier under input constraints.
    \item Forward invariance and recursive feasibility are established for the proposed safety filter.
    \item Simulation studies on a VTOL model demonstrate learned safe-envelope evolution and safe operation under reduced control authority.
\end{itemize}
\section{Background and Preliminaries}

\textit{Control Barrier Function:}
Consider a control-affine nonlinear system of the form,
\begin{equation} \label{dyna}
    \dot{x} = f(x) + g(x)u,
\end{equation}
where \(x \in \mathcal{X} \subset \mathbb{R}^n\) denotes the system state and \(u \in U\) denotes the control input. In the CBF framework \cite{ames2019control}, safety is characterized through a forward-invariant safe set defined by a continuously differentiable function \(h:\mathcal{X}\to\mathbb{R}\). The corresponding safe set is defined as $\mathcal{S} = \{x \in \mathcal{X} : h(x)\geq 0\}$, with boundary $\partial \mathcal{S} = \{x \in \mathcal{X} : h(x)=0\}$ and interior $\mathrm{Int}(\mathcal{S}) = \{x \in \mathcal{X} : h(x)>0\}$. Here, $\partial \mathcal{S}$ denotes the boundary of $\mathcal{S}$, while $\mathrm{Int}(\mathcal{S})$ denotes its interior. A continuously differentiable function $h$ is called a control barrier function if there exists an extended class-$\mathcal{K}$ function $\alpha$ such that
\begin{align}
\label{barriercond}
    \sup_{u \in U} \left[ L_f h(x) + L_g h(x)u \right] \geq -\alpha(h(x)),
\end{align}
for all $x \in \mathcal{X}$, where
$L_f h(x) = \frac{\partial h}{\partial x} f(x)$ and
$L_g h(x) = \frac{\partial h}{\partial x} g(x)$ are the Lie derivatives of $h$ along $f$ and $g$, respectively.

\noindent\textbf{\textit{Definition 1:}} A function $\alpha:\mathbb{R}\to\mathbb{R}$ is said to be an extended class-$\mathcal{K}$ function if it is strictly increasing and satisfies $\alpha(0)=0$ \cite{ames2019control}.

The set of control inputs that satisfy the CBF condition is
\begin{align}
\label{kcbf}
    K_{\mathrm{cbf}}(x) = \{u \in U : L_f h(x) + L_g h(x)u + \alpha(h(x)) \geq 0\}.
\end{align}
Any input $u \in K_{\mathrm{cbf}}(x)$ ensures that the safe set remains forward invariant. This result is established using Nagumo's theorem \cite{nagumo1942lage}, together with the regularity condition $\frac{\partial h}{\partial x}(x) \neq 0$ for all $x \in \partial \mathcal{S}$, as discussed in \cite{ames2019control}. In particular, Nagumo's theorem \cite{nagumo1942lage} states that forward invariance is guaranteed when the system vector field is directed inward or tangent to the boundary of the safe set.
\section{Problem Formulation}
For the continuous-time control-affine nonlinear system defined in \eqref{dyna}, the drift dynamics $f:\mathbb{R}^n \to \mathbb{R}^n$ and the control effectiveness matrix $g:\mathbb{R}^n \to \mathbb{R}^{n\times m}$ are assumed to be locally Lipschitz continuous \cite{vamvoudakis2010online}. Also, let \(u_{\mathrm{nom}}(x,t)\) denote the nominal control input. Under nominal operating conditions, the control input is constrained to the compact admissible input set
\begin{equation}
\mathcal{U}_0
=
\left\{
u\in\mathbb{R}^m:
A_u u \leq b_u
\right\},
\end{equation}
where $A_u \in \mathbb{R}^{p\times m}$ and $b_u \in \mathbb{R}^{p}$ define the nominal compact polytopic input set. Let $\mathcal{S}_0 \subset \mathbb{R}^n$ denote the nominal safe set associated with the input constraint $u \in \mathcal{U}_0$. The nominal safe set $\mathcal{S}_0$ is not known in closed form and is instead learned from nominal operational data using an SVM decision function. To model the loss of control effectiveness, we introduce a time-varying degradation parameter $\lambda(t) \in [\lambda_{\min},1]$, where $\lambda_{\min}>0$. In this work, $\lambda(t)$ is used specifically to characterize actuator degradation. However, the proposed framework can be extended to other sources of performance degradation, such as sensor degradation or other time-varying system limitations, provided that their effect on the admissible operating conditions can be appropriately parameterized. The parameter $\lambda(t)$ denotes the remaining actuator effectiveness, with $\lambda(t)=1$ corresponding to nominal authority and smaller values indicating reduced control authority. The degradation rate is defined as
\begin{equation}
    v_d(t) = -\dot{\lambda}(t) \geq 0.
\end{equation}
Accordingly, the degraded admissible input set is modeled as
\begin{equation}
\mathcal{U}(t)
=
\left\{
u\in\mathbb{R}^m:
A_u u \leq \lambda(t)b_u
\right\}.
\end{equation}
Here, \(A_u u \leq b_u\) represents the nominal input limits, and \(\lambda(t)\) contracts the admissible control set as actuator authority decreases. A scheduling signal \(\lambda_s(t)\) is introduced to regulate the temporal evolution of the learned SVM boundary. Its dynamics are defined by
\begin{equation}
    \dot{\lambda}_s(t)=-v_s(t),
    \label{eq:lambdas_dynamics}
\end{equation}
where \(v_s(t)\) is a bounded scheduling rate selected as
\begin{equation}
    v_s(t)
    =
    (1+\varepsilon_s)v_d(t),
    \qquad
    0<\varepsilon_s\ll 1,
    \label{eq:vs_buffered_rate}
\end{equation}
subject to the rate bound \(0 \leq v_s(t) \leq \bar v_s\), where \(\bar v_s>0\).
With the initialization \(\lambda_s(t_0)=\lambda(t_0)\), this construction gives \(\lambda_s(t)\leq \lambda(t)\) for all \(t\geq t_0\). Thus, \(\lambda(t)\) determines the instantaneous admissible input set \(\mathcal U(t)\), while \(\lambda_s(t)\) governs a bounded contraction of the learned SVM boundary. As actuator degradation reduces control authority, the set of states from which safety can be maintained evolves over time. Hence, the nominal safe set \(\mathcal{S}_0\) is replaced by a degradation-aware time-varying safe set \(\hat {\mathcal{S}}(t)\). This set is defined by an online-updated SVM boundary governed by \(\lambda_s(t)\). The learned boundary is represented by \(h:\mathbb{R}^n \times \mathbb{R}_{\geq 0} \to \mathbb{R}\), whose estimated set defines the degradation-aware safe region,
\begin{equation}
    \widehat {\mathcal{S}}(t)
    =
    \left\{
    x\in\mathbb{R}^n : h(x,t)\geq 0
    \right\}.
    \label{eq:time_varying_safe_set}
\end{equation}
The time dependence of \(h(x,t)\) represents the contraction of the learned safe envelope as \(\lambda_s(t)\) decreases. This contraction is implemented through the continuous-time online decremental SVM update by reducing selected support-vector influence. Since SVM active-set changes can alter the local boundary parameterization, the learned barrier is active-set dependent and is not globally continuously differentiable in time over the full horizon. Therefore, the operating horizon is partitioned according to the switching instants at which the SVM active set changes. Let \(t_0=\tau_0 < \tau_1 < \cdots < \tau_L < \tau_{L+1}=T\) denote the ordered sequence of active-set switching times over the finite horizon \([t_0,T]\). These switching times induce the intervals \(I_\ell=[\tau_\ell,\tau_{\ell+1})\), \(\ell=0,1,\ldots,L\).
On each interval \(I_\ell\), the active set is fixed and the learned decision function is denoted by \(h_\ell(x,t)\). The corresponding estimated local safe set is
\begin{equation}
    \widehat {\mathcal{S}}_\ell(t)
    =
    \left\{
    x\in\mathbb{R}^n : h_\ell(x,t)\geq 0
    \right\},
    \qquad t\in I_\ell.
    \label{eq:local_safe_set}
\end{equation}
Thus, \(\widehat{\mathcal{S}}(t)=\widehat{\mathcal{S}}_\ell(t)\) on \(I_\ell\). On each fixed-active-set interval, the boundary evolves according to parameter variation. At active-set switching times, however, the representation of the learned boundary may change. Therefore, forward invariance cannot be established solely from a standard smooth time-varying CBF condition applied on each interval, the effect of active-set transitions is also considered in this study. Given \(x(t_0)\in \widehat{\mathcal{S}}(t_0)\), the control objective is to find \(u(t)\in\mathcal{U}(t)\) such that
\begin{equation}
    x(t)\in \widehat{\mathcal{S}}(t),
    \qquad \forall t \in [t_0,T].
    \label{eq:safety_objective}
\end{equation}
This is achieved by constructing a degradation-aware online SVM-CBF \(h(x,t)\), with \(\widehat{\mathcal{S}}(t)=\{x\in\mathbb{R}^n:h(x,t)\geq 0\}\), and enforcing it through a time-varying SVM-CBF-QP safety filter under the degraded input constraint \(\mathcal{U}(t)\). Also, active-set transitions in the online decremental SVM update are addressed using a novel homotopy-smoothed barrier construction to preserve regularity.
\section{Data-Driven Time-Varying Safe Set Construction}
\label{sec:tv_safe_set_construction}
This section develops a data-driven degradation-aware safe operating envelope. A nominal safe set is first learned from operational data using an SVM, whose decision function provides an implicit representation of the nominal safe boundary. This learned boundary then serves as the baseline for the online degradation-aware update mechanism developed below.
\subsection{SVM-Based Nominal Safe Set Learning}
\label{subsec:svm_nominal_safe_set}
Under nominal operating conditions, let
\[
\mathcal{D}_0 = \{(x_i,y_i)\}_{i=1}^N, 
\qquad 
x_i \in \mathbb{R}^n, 
\quad 
y_i \in \{-1,+1\},
\]
denote a labeled dataset collected from the system, with \(y_i=+1\) for safe samples and \(y_i=-1\) for unsafe samples. The objective is to learn an SVM boundary separating safe and unsafe samples, thereby defining the nominal safe operating envelope from data. To this end, we employ a soft-margin SVM with the radial basis function (RBF) kernel
\begin{equation}
    K(x,x_i) = \exp\!\left(-\gamma \|x-x_i\|^2\right),
    \label{eq:rbf_kernel}
\end{equation}
where \(\gamma>0\) is the kernel width parameter. The resulting nominal SVM decision function is given by
\begin{equation}
    h_0(x) 
    =
    \sum_{i \in \mathcal{V}_0} \alpha_i y_i K(x,x_i) + b,
    \label{eq:nominal_decision_function}
\end{equation}
where \(\mathcal{V}_0=\{i:\alpha_i>0\}\) denotes the nominal support-vector index set, \(\alpha_i\geq 0\) are the dual coefficients, and \(b\in\mathbb{R}\) is the bias term. The learned nominal safe set is then defined as
\begin{equation}
    \widehat{\mathcal{S}}_0
    =
    \left\{
    x \in \mathbb{R}^n : h_0(x) \geq 0
    \right\}.
    \label{eq:nominal_safe_set_svm}
\end{equation}
Thus, \(h_0(x)\) provides an implicit data-driven approximation of the nominal safe operating region over the domain $\mathcal{X}$. Besides, the nominal safe set is nonempty.

\begin{assumption}
\label{ass:nominal_dataset}
The nominal dataset \(\mathcal{D}_0\) is sufficiently representative over the domain of interest, and the learned SVM decision function \(h_0(x)\) provides an approximation of the nominal safe operating envelope for the state $x$.
\end{assumption}

The SVM dual coefficients are obtained by solving the convex quadratic program
\begin{equation}
\begin{aligned}
    \min_{\alpha} \quad
    & \frac{1}{2}\sum_{i=1}^{N}\sum_{j=1}^{N}\alpha_i Q_{ij}\alpha_j
    -
    \sum_{i=1}^{N}\alpha_i \\
    \text{s.t.} \quad
    & 0 \leq \alpha_i \leq C, 
    \qquad i=1,\dots,N, \\
    & \sum_{i=1}^{N} y_i \alpha_i = 0,
\end{aligned}
\label{eq:svm_dual_problem}
\end{equation}
where \(Q_{ij}=y_i y_j K(x_i,x_j)\) and \(C>0\) is the soft-margin penalty parameter. To derive the optimality conditions used in the decremental update, define the Lagrangian for \eqref{eq:svm_dual_problem} as
\begin{equation}
    W(\alpha,b)
    =
    \frac{1}{2}\sum_{i=1}^{N}\sum_{j=1}^{N}\alpha_i Q_{ij}\alpha_j
    -
    \sum_{i=1}^{N}\alpha_i
    +
    b\sum_{i=1}^{N} y_i \alpha_i .
    \label{eq:svm_lagrangian}
\end{equation}
Here, \(b\) is the Lagrange multiplier associated with the equality constraint and corresponds to the SVM bias term. The stationarity condition with respect to \(\alpha_i\) gives
\begin{equation}
    g_i 
    =
    \frac{\partial W}{\partial \alpha_i}
    =
    \sum_{j=1}^{N} Q_{ij}\alpha_j
    +
    y_i b
    -
    1
    =
    y_i h_0(x_i)-1,
    \label{eq:kkt_gi}
\end{equation}
while differentiating with respect to \(b\) yields
\begin{equation}
    \frac{\partial W}{\partial b}
    =
    \sum_{i=1}^{N} y_i \alpha_i
    =
    0.
    \label{eq:kkt_bias_constraint}
\end{equation}
The KKT residuals \(g_i\), together with the box constraints on \(\alpha_i\), induce the standard partition of the training samples into margin, error, and reserve sets:
\begin{align}
    \mathcal{M}_0
    &=
    \left\{
    i: 0<\alpha_i<C,\; g_i=0
    \right\},
    \label{eq:nominal_margin_set}
    \\
    \mathcal{E}_0
    &=
    \left\{
    i: \alpha_i=C,\; g_i\leq 0
    \right\},
    \label{eq:nominal_error_set}
    \\
    \mathcal{R}_0
    &=
    \left\{
    i: \alpha_i=0,\; g_i\geq 0
    \right\}.
    \label{eq:nominal_reserve_set}
\end{align}
The sets \(\mathcal{M}_0\), \(\mathcal{E}_0\), and \(\mathcal{R}_0\) contain margin samples, upper-bound samples, and zero-coefficient samples, respectively. These sets define the active-set structure used in the decremental update mechanism. Under actuator degradation, the nominal learned safe set \(\widehat{\mathcal{S}}_0\) may no longer be compatible with the contracted admissible control set. Therefore, the SVM boundary must be updated as degradation evolves. Since this boundary serves as the basis for the degradation-aware safe-set construction, the following lemma establishes the local Lipschitz continuity of the nominal SVM decision function.
\begin{lemma}
\label{lem:h0_regular}
Let \(h_0:\mathbb{R}^n \to \mathbb{R}\) be the nominal SVM decision function defined in \eqref{eq:nominal_decision_function}, with kernel \(K\) given by \eqref{eq:rbf_kernel}. Then \(h_0\) is infinitely differentiable with respect to \(x\). Moreover, for every compact set \(\mathcal{X}_c \subset \mathbb{R}^n\), there exists \(L_{h_0}>0\) such that \(|h_0(x_1)-h_0(x_2)| \leq L_{h_0}\|x_1-x_2\|\) for all \(x_1,x_2 \in \mathcal{X}_c\). Consequently, \(h_0\) is locally Lipschitz on \(\mathbb{R}^n\), and its Lie derivatives along the system vector fields are well defined on the domain of interest.
\end{lemma}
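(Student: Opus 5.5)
The plan is to argue directly from the structure of \eqref{eq:nominal_decision_function}: $h_0$ is a finite linear combination of RBF kernels plus a constant. We prove smoothness of each kernel term, then obtain local Lipschitz continuity from a bound on the gradient over a convex compact set containing $\mathcal{X}_c$.

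\textbf{Smoothness.} For each fixed $x_i$, the map $x\mapsto -\gamma\|x-x_i\|^2$ is a polynomial in the coordinates of $x$, hence $C^\infty$. Since $\exp$ is $C^\infty$ on $\mathbb{R}$, the composition $K(\cdot,x_i)$ is $C^\infty$ by the chain rule. The index set $\mathcal{V}_0$ is finite ($|\mathcal{V}_0|\le N$), and the coefficients $\alpha_i y_i$ and $b$ are constants. Therefore $h_0$ is a finite linear combination of $C^\infty$ functions plus a constant, and is $C^\infty$ on $\mathbb{R}^n$. For later use we record the gradient
\[
\nabla h_0(x)=-2\gamma\sum_{i\in\mathcal{V}_0}\alpha_i y_i\,K(x,x_i)\,(x-x_i).
\]

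\textbf{Lipschitz bound.} Given a compact $\mathcal{X}_c$, we replace it by its convex hull $\mathrm{co}(\mathcal{X}_c)$, which is compact in $\mathbb{R}^n$ and convex, so segments between points of $\mathcal{X}_c$ stay inside it. Since $\nabla h_0$ is continuous, $M:=\max_{x\in\mathrm{co}(\mathcal{X}_c)}\|\nabla h_0(x)\|$ is finite. An explicit global bound is also available: using $\alpha_i\le C$, $K\le 1$, and $\|z\|e^{-\gamma\|z\|^2}\le (2\gamma e)^{-1/2}$, we get
\[
\|\nabla h_0(x)\|\le 2\gamma C|\mathcal{V}_0|(2\gamma e)^{-1/2}.
\]
By the mean value theorem applied along the segment $x_2+s(x_1-x_2)$, $s\in[0,1]$,
\[
|h_0(x_1)-h_0(x_2)|\le M\|x_1-x_2\|.
\]
We set $L_{h_0}:=\max\{M,1\}>0$, which ensures strict positivity as the statement requires. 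Local Lipschitz continuity on $\mathbb{R}^n$ then follows by taking $\mathcal{X}_c$ to be a closed ball around any point.

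\textbf{Lie derivatives.} Since $h_0\in C^1$ and $f,g$ are locally Lipschitz, hence continuous, the quantities $L_fh_0=\nabla h_0^\top f$ and $L_gh_0=\nabla h_0^\top g$ are well-defined continuous functions on $\mathcal{X}$.

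The only step needing care is the mean value argument, which requires convexity; using the convex hull (or the global gradient bound) resolves this. Beyond that, the proof is routine and has no real obstacle.
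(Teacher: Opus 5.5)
Your proof is correct and follows the same route as the paper. Smoothness of each RBF term gives $h_0\in C^\infty$, and bounding $\|\nabla h_0\|$ on a compact set and applying the mean value theorem gives the Lipschitz estimate. Your passage to the convex hull $\mathrm{co}(\mathcal{X}_c)$, or alternatively your explicit global gradient bound, is a worthwhile refinement: the paper applies the mean value theorem on $\mathcal{X}_c$ directly and does not address that segments between points of a nonconvex $\mathcal{X}_c$ can leave the set.
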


\begin{proof}
Since the RBF kernel $K(x,x_i)=\exp(-\gamma \|x-x_i\|^2)$ is \(C^\infty\) in \(x\) \cite{buhmann2000radial}, and \(h_0(x)\) is a finite linear combination of such kernels plus a constant bias term, it follows that \(h_0(\cdot)\) is \(C^\infty\). Next, let \(\mathcal{X}_c \subset \mathbb{R}^n\) be compact. The gradient of the RBF kernel with respect to \(x\) is
\begin{equation}
    \nabla_x K(x,x_i)
    =
    -2\gamma (x-x_i)
    \exp(-\gamma \|x-x_i\|^2),
    \label{eq:grad_rbf}
\end{equation}
which is continuous in \(x\). Since \(\mathcal{X}_c\) is compact, \(\|\nabla_x K(x,x_i)\|\) is bounded on \(\mathcal{X}_c\) for each \(i\). Therefore,
\begin{equation}
    \nabla_x h_0(x)
    =
    \sum_{i \in \mathcal{V}_0}
    \alpha_i y_i \nabla_x K(x,x_i)
    \label{eq:grad_h0}
\end{equation}
is also bounded on \(\mathcal{X}_c\). Define \(L_{h_0}=\sup_{x \in \mathcal{X}_c}\|\nabla_x h_0(x)\|<\infty\). Then, by the mean value theorem \cite{apostol1958mathematical}, $|h_0(x_1)-h_0(x_2)|
\leq
L_{h_0}\|x_1-x_2\|,
\forall x_1,x_2 \in \mathcal{X}_c$. Hence, \(h_0\) is Lipschitz on \(\mathcal{X}_c\). Since \(\mathcal{X}_c\) is arbitrary, \(h_0\) is locally Lipschitz on \(\mathbb{R}^n\).
\end{proof}

\subsection{Decremental SVM Update Under Actuator Degradation}
\label{subsec:decremental_svm_update}
The nominal SVM decision function in \eqref{eq:nominal_decision_function} represents the safe envelope under full actuator authority. As degradation reduces the admissible input set, this boundary must be updated to remain compatible with the available control authority. The proposed update is motivated by incremental and decremental SVM learning~\cite{cauwenberghs2000incremental}, where selected support-vector influence is adjusted while preserving the SVM KKT conditions. In the decremental setting, reducing the coefficient of a selected support vector provides a mechanism for contracting the learned safe set as actuator effectiveness decreases. Next, consider a decremental update step, and let \(c\) denote the index of the selected support vector whose coefficient \(\alpha_c\) is to be reduced. During this step, \(c\) is treated as the decremental parameter and is excluded from the active margin set \(\mathcal{M}\). The set \(\mathcal{M}\) is held fixed, while the bias \(b\) and the coefficients \(\alpha_j\), \(j\in\mathcal{M}\), are adjusted to preserve the active KKT conditions. The remaining non-margin coefficients that are nonzero are held fixed during the current decremental step, while zero-coefficient reserve samples do not contribute to the SVM decision function. Since margin samples satisfy \(g_i=0\) for all \(i\in\mathcal{M}\), the decremental update imposes
\begin{equation}
    \Delta g_i = 0,
    \qquad
    \forall i\in\mathcal{M}.
    \label{eq:delta_gi_zero_margin}
\end{equation}
Using the KKT residual definition in \eqref{eq:kkt_gi}, the induced variation in \(g_i\) due to changes in \(\alpha_c\), \(\alpha_j\), \(j\in\mathcal{M}\), and \(b\) is
\begin{equation}
    \Delta g_i
    =
    Q_{ic}\Delta \alpha_c
    +
    \sum_{j\in\mathcal{M}} Q_{ij}\Delta \alpha_j
    +
    y_i\Delta b .
    \label{eq:delta_gi}
\end{equation}
In addition, the equality constraint \eqref{eq:kkt_bias_constraint} must remain satisfied, which gives
\begin{equation}
    y_c\Delta\alpha_c
    +
    \sum_{j\in\mathcal{M}} y_j\Delta\alpha_j
    =
    0.
    \label{eq:delta_equality_constraint}
\end{equation}
Let $\mathcal{M}=\{m_1,\ldots,m_{|\mathcal{M}|}\}$ denote the current margin set. Collect the unknown parameter variations into \(\Delta\vartheta=\begin{bmatrix}\Delta b & \Delta\alpha_{m_1} & \cdots & \Delta\alpha_{m_{|\mathcal{M}|}}\end{bmatrix}^{\top}\). Then \eqref{eq:delta_equality_constraint} and \eqref{eq:delta_gi_zero_margin} can be written in the compact form
\begin{equation}
    \mathcal{Q}\Delta\vartheta
    =
    -q_c\Delta\alpha_c,
    \label{eq:augmented_decremental_system_compact}
\end{equation}
where
\begin{equation}
    \mathcal{Q}
    =
    \begin{bmatrix}
        0 & y_{m_1} & \cdots & y_{m_{|\mathcal{M}|}} \\
        y_{m_1} & Q_{m_1m_1} & \cdots & Q_{m_1m_{|\mathcal{M}|}} \\
        \vdots & \vdots & \ddots & \vdots \\
        y_{m_{|\mathcal{M}|}} & Q_{m_{|\mathcal{M}|}m_1} & \cdots & Q_{m_{|\mathcal{M}|}m_{|\mathcal{M}|}}
    \end{bmatrix},
    \label{eq:Q_augmented_def}
\end{equation}
and \(q_c=\begin{bmatrix}y_c & Q_{m_1c} & \cdots & Q_{m_{|\mathcal{M}|}c}\end{bmatrix}^{\top}\). The first row of \eqref{eq:augmented_decremental_system_compact} enforces the equality constraint, while the remaining rows enforce the active margin conditions. If \(\mathcal{Q}\) is nonsingular for the current active set~\cite{cauwenberghs2000incremental}, then \(\Delta\vartheta\) is uniquely determined by \(\Delta\alpha_c\)
\begin{equation}
    \Delta\vartheta
    =
    -\mathcal{Q}^{-1}q_c\Delta\alpha_c.
    \label{eq:decremental_solution_compact}
\end{equation}
Define the sensitivity vector \(\begin{bmatrix}\beta & \beta_{m_1} & \cdots & \beta_{m_{|\mathcal{M}|}}\end{bmatrix}^{\top}=-\mathcal{Q}^{-1}q_c\). Then the corresponding updates are
\begin{equation}
    \Delta b
    =
    \beta\Delta\alpha_c,
    \qquad
    \Delta\alpha_j
    =
    \beta_j\Delta\alpha_c,
    \quad
    \forall j\in\mathcal{M}.
    \label{eq:beta_update_law}
\end{equation}
For samples that are not in the current margin set, substituting \eqref{eq:beta_update_law} into \eqref{eq:delta_gi} yields
\begin{equation}
    \Delta g_i
    =
    ( Q_{ic}
    +
    \sum_{j\in\mathcal{M}} Q_{ij}\beta_j
    +
    y_i\beta )
    \Delta\alpha_c,
    \qquad
    i\notin\mathcal{M}.
    \label{eq:delta_gi_gamma}
\end{equation}
Accordingly, define the residual sensitivity
\begin{equation}
    \gamma_i
    =
    Q_{ic}
    +
    \sum_{j\in\mathcal{M}} Q_{ij}\beta_j
    +
    y_i\beta,
    \qquad
    i\notin\mathcal{M},
    \label{eq:gamma_definition}
\end{equation}
so that \(\Delta g_i=\gamma_i\Delta\alpha_c\) for \(i\notin\mathcal{M}\). The sensitivities \(\gamma_i\) determine when non-margin samples reach KKT boundaries and trigger active-set updates. Therefore, the discrete decremental SVM update is piecewise, \(\alpha_c\) is reduced until a KKT boundary is reached, after which the margin, error, and reserve sets are recomputed. Although this preserves the KKT conditions during each step, the resulting boundary evolution can not comply the regularity required for a time-varying CBF. This motivates the continuous-time decremental formulation developed next, which ties coefficient variation to the degradation signal and ensures differentiability.
\subsection{Continuous-Time Decremental Update Law}
\label{subsec:ct_decremental_update}
The discrete decremental update preserves the SVM KKT conditions through finite coefficient changes and active-set sensitivity recomputation. For time-varying CBF construction, however, the learned boundary must evolve continuously in time. We therefore replace the discrete decremental step with a continuous parameter evolution driven by the scheduling signal \(\lambda_s(t)\). Consider a fixed-active-set interval \(\mathcal{I}_\ell=[\tau_\ell,\tau_{\ell+1})\), over which the active margin set remains fixed. Let \(\mathcal{M}_\ell=\{m_1,\ldots,m_{|\mathcal{M}_\ell|}\}\) denote this margin set, excluding the selected decremental index \(c\), whose coefficient is treated as the degradation-driven parameter. Define the active boundary-parameter vector
\begin{equation}
    \theta(t)
    =
    \begin{bmatrix}
        b(t) &
        \alpha_{m_1}(t) &
        \cdots &
        \alpha_{m_{|\mathcal{M}_\ell|}}(t)
    \end{bmatrix}^{\top}
    \in\mathbb{R}^{|\mathcal{M}_\ell|+1}.
    \label{eq:theta_def}
\end{equation}
The selected coefficient is parameterized by the degradation-scheduling signal as \(\alpha_c(t)=\varphi_c(\lambda_s(t))\), where \(\varphi_c:[\lambda_{\min},1]\to\mathbb{R}_{\geq 0}\) is continuously differentiable, satisfies \(\varphi_c(1)=\alpha_c^0\), and obeys \(\varphi_c'(\lambda_s)\geq 0\) for \(\lambda_s\in[\lambda_{\min},1]\). Since $\dot{\lambda}_s(t)\leq 0$, this choice implies that $\alpha_c(t)=\varphi_c(\lambda_s(t))$ is non-increasing in time. Thus, as the scheduled degradation level decreases, the contribution of the selected support vector is reduced. On the fixed-active-set interval $\mathcal{I}_\ell$, the equality constraint and the active margin conditions is written as
\begin{equation}
    F(\theta(t),\lambda_s(t))=0,
    \label{eq:implicit_kkt_map}
\end{equation}
where
\begin{equation}
\begin{aligned}
    &F(\theta,\lambda_s)
    =
    \\
    &\begin{bmatrix}
        \displaystyle
        \sum_{j\in\mathcal{M}_l} y_j\alpha_j
        +
        y_c\varphi_c(\lambda_s)
        \\[1.5ex]
        \displaystyle
        \sum_{j\in\mathcal{M}_l} Q_{m_1j}\alpha_j
        +
        Q_{m_1c}\varphi_c(\lambda_s)
        +
        y_{m_1}b
        -
        1
        \\
        \vdots
        \\
        \displaystyle
        \sum_{j\in\mathcal{M}_l} Q_{m_{|\mathcal{M}_l|}j}\alpha_j
        +
        Q_{m_{|\mathcal{M}_l|}c}\varphi_c(\lambda_s)
        +
        y_{m_{|\mathcal{M}_l|}}b
        -
        1
    \end{bmatrix}      
\end{aligned}
    \label{eq:F_map_expanded}
\end{equation}
The first component of \(F\) enforces the SVM equality constraint, while the remaining components enforce \(g_i=0\) for all active margin samples \(i\in\mathcal{M}_\ell\). Differentiating \eqref{eq:implicit_kkt_map} with respect to time yields
\begin{equation}
    \frac{\partial F}{\partial \theta}(\theta(t),\lambda_s(t))\dot{\theta}(t)
    +
    \frac{\partial F}{\partial \lambda_s}(\theta(t),\lambda_s(t))\dot{\lambda}_s(t)
    =
    0.
    \label{eq:implicit_diff_kkt}
\end{equation}
Define
\begin{equation}
    H(t)
    =
    \frac{\partial F}{\partial \theta}(\theta(t),\lambda_s(t)),
    \qquad
    J(t)
    =
    \frac{\partial F}{\partial \lambda_s}(\theta(t),\lambda_s(t)).
    \label{eq:H_J_defs}
\end{equation}
For the system \eqref{eq:F_map_expanded}, these matrices satisfy
\begin{equation}
    H(t)
    =
    \begin{bmatrix}
        0 & y_{m_1} & \cdots & y_{m_{|\mathcal{M}_l|}}\\
        y_{m_1} & Q_{m_1m_1} & \cdots & Q_{m_1m_{|\mathcal{M}_l|}}\\
        \vdots & \vdots & \ddots & \vdots\\
        y_{m_{|\mathcal{M}_l|}} & Q_{m_{|\mathcal{M}_l|}m_1} & \cdots & Q_{m_{|\mathcal{M}_l|}m_{|\mathcal{M}_l|}}
    \end{bmatrix},
    \label{eq:H_explicit}
\end{equation}
and \(J(t)=\varphi_c'(\lambda_s(t))
\begin{bmatrix}
y_c & Q_{m_1c} & \cdots & Q_{m_{|\mathcal{M}_\ell|}c}
\end{bmatrix}^{\top}\). Thus, \(H(t)\) coincides with the augmented active-set matrix appearing in the decremental sensitivity calculation, while \(J(t)\) captures how the active KKT equalities change with the degradation-driven coefficient \(\varphi_c(\lambda_s(t))\). Here, \(H(t)\) is nonsingular, and \eqref{eq:implicit_diff_kkt} gives
\begin{equation}
    \dot{\theta}(t)
    =
    -H(t)^{-1}J(t)\dot{\lambda}_s(t).
    \label{eq:theta_dot_general}
\end{equation}
Using \(v_s(t)=-\dot{\lambda}_s(t)\), this can equivalently be written as
\begin{equation}
    \dot{\theta}(t)
    =
    H(t)^{-1}J(t)v_s(t).
    \label{eq:theta_dot_vs}
\end{equation}
Therefore,
\begin{equation}
    \begin{bmatrix}
        \dot{b}(t)\\
        \dot{\alpha}_{\mathcal{M}}(t)
    \end{bmatrix}
    =
    -H(t)^{-1}J(t)\dot{\lambda}_s(t)
    =
    H(t)^{-1}J(t)v_s(t).
    \label{eq:boundary_parameter_dynamics}
\end{equation}
Equation \eqref{eq:boundary_parameter_dynamics} defines the continuous-time decremental update law on a fixed-active-set interval. It provides a degradation-driven evolution of the active SVM boundary parameters while preserving the active KKT equalities.

\begin{assumption}
\label{ass:fixed_active_set_interval}
On each fixed-active-set interval \(\mathcal{I}_\ell\), the margin set \(\mathcal{M}_\ell\) remains constant, \(\lambda_s(t)\in[\lambda_{\min},1]\) is continuously differentiable, and \(\varphi_c\) is continuously differentiable. The implicit system \(F(\theta,\lambda_s)=0\) admits a solution, and \(H(t)=\frac{\partial F}{\partial\theta}(\theta(t),\lambda_s(t))\) is nonsingular for all \(t\in\mathcal{I}_\ell\).
\end{assumption}

\begin{lemma}
\label{lem:theta_differentiable}
Under Assumption~\ref{ass:fixed_active_set_interval}, for each \(t^\ast\in\mathcal{I}_\ell\), there exists a neighborhood \(\mathcal{I}^\ast\subseteq\mathcal{I}_\ell\) of \(t^\ast\) and a locally unique differentiable trajectory \(\theta(t)\) satisfying \(F(\theta(t),\lambda_s(t))=0\) for all \(t\in\mathcal{I}^\ast\). Moreover, $\theta(t)$ satisfies \eqref{eq:boundary_parameter_dynamics}, $\forall t\in\mathcal{I}^\ast$.
\end{lemma}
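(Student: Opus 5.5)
The plan is to apply the implicit function theorem to the map $F$ in \eqref{eq:F_map_expanded}, viewed as a function of the parameter $\lambda_s$, and then compose the resulting local solution with the scheduling signal $\lambda_s(t)$.

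\textbf{Regularity of $F$.} Inspecting \eqref{eq:F_map_expanded}, $F$ is affine in $\theta=(b,\alpha_{\mathcal{M}_\ell})$ with constant coefficients $y_j$ and $Q_{ij}$. It depends on $\lambda_s$ only through $\varphi_c(\lambda_s)$, and $\varphi_c$ is $C^1$ by Assumption~\ref{ass:fixed_active_set_interval}. Hence $F$ is $C^1$ on $\mathbb{R}^{|\mathcal{M}_\ell|+1}\times[\lambda_{\min},1]$. Its partial derivatives are exactly $\partial F/\partial\theta=H$ in \eqref{eq:H_explicit}, which is independent of $(\theta,\lambda_s)$ because the margin set is fixed on $\mathcal{I}_\ell$, and $\partial F/\partial\lambda_s=J$ as defined in \eqref{eq:H_J_defs}.

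\textbf{Implicit function theorem.} Fix $t^\ast\in\mathcal{I}_\ell$ and set $\lambda_s^\ast=\lambda_s(t^\ast)$. Assumption~\ref{ass:fixed_active_set_interval} supplies a point $\theta^\ast$ with $F(\theta^\ast,\lambda_s^\ast)=0$ and guarantees that $H$ is nonsingular. The implicit function theorem then gives neighborhoods $\mathcal{N}_\lambda\ni\lambda_s^\ast$ and $\mathcal{N}_\theta\ni\theta^\ast$, together with a unique $C^1$ map $\psi:\mathcal{N}_\lambda\to\mathcal{N}_\theta$, such that $F(\psi(\lambda_s),\lambda_s)=0$ and
\[
\psi'(\lambda_s)=-H^{-1}\tfrac{\partial F}{\partial\lambda_s}(\psi(\lambda_s),\lambda_s).
\]
Since $F$ is affine in $\theta$ with invertible linear part, the solution $\psi(\lambda_s)=-H^{-1}F(0,\lambda_s)$ is in fact globally unique. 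This gives an explicit formula and is a useful sanity check on the local statement.

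\textbf{Composition in time.} The signal $\lambda_s(t)$ is $C^1$, hence continuous, so there is a neighborhood $\mathcal{I}^\ast\subseteq\mathcal{I}_\ell$ of $t^\ast$ with $\lambda_s(\mathcal{I}^\ast)\subseteq\mathcal{N}_\lambda$. Define $\theta(t)=\psi(\lambda_s(t))$ on $\mathcal{I}^\ast$; this is differentiable by the chain rule and satisfies $F(\theta(t),\lambda_s(t))=0$. Local uniqueness is inherited from $\psi$: any other solution trajectory that stays in $\mathcal{N}_\theta$ must coincide with $\psi(\lambda_s(t))$. The chain rule gives
\[
\dot\theta(t)=\psi'(\lambda_s(t))\dot\lambda_s(t)=-H(t)^{-1}J(t)\dot\lambda_s(t),
\]
and substituting $v_s=-\dot\lambda_s$ yields \eqref{eq:boundary_parameter_dynamics}.

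\textbf{Main obstacle.} The only delicate points are technical rather than conceptual. First, the trajectory must be identified with the solution furnished by the assumption, so the implicit function theorem should be applied at the point $(\theta(t^\ast),\lambda_s(t^\ast))$ of that given solution. Second, the endpoint $\tau_\ell$ requires care because $\mathcal{I}_\ell$ is half-open and $\lambda_s$ may hit the boundary value $1$. I would handle both by extending $\varphi_c$ to a $C^1$ function on an open interval containing $[\lambda_{\min},1]$, and by interpreting the neighborhood as relative to $\mathcal{I}_\ell$, with a one-sided derivative at $\tau_\ell$.
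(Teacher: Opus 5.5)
Your proof is correct and follows the paper's argument: apply the implicit function theorem at $(\theta(t^\ast),\lambda_s(t^\ast))$ using nonsingularity of $H$, compose the local solution map with the $C^1$ signal $\lambda_s(t)$, and differentiate $F(\theta(t),\lambda_s(t))=0$ to obtain $\dot\theta=-H^{-1}J\dot\lambda_s=H^{-1}Jv_s$. You add two refinements the paper does not spell out. First, you note that $F$ is affine in $\theta$ with constant linear part $H$, so $\theta=-H^{-1}F(0,\lambda_s)$ is globally unique and explicitly $C^1$. Second, you treat the half-open endpoint carefully. Both are sound.
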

\begin{proof}
Fix \(t^\ast\in\mathcal{I}_\ell\). By Assumption~\ref{ass:fixed_active_set_interval}, there exists \(\theta^\ast=\theta(t^\ast)\) such that \(F(\theta^\ast,\lambda_s(t^\ast))=0\), and the Jacobian \(\frac{\partial F}{\partial \theta}(\theta^\ast,\lambda_s(t^\ast))\) is nonsingular. Since \(F\) is continuously differentiable, the implicit function theorem implies that, in neighborhoods of \(\lambda_s(t^\ast)\) and \(\theta^\ast\), the relation \(F(\theta,\lambda_s)=0\) defines a locally unique continuously differentiable mapping \(\theta=\Theta(\lambda_s)\). Since \(\lambda_s(t)\) is continuously differentiable on \(\mathcal{I}_\ell\), the composition \(\theta(t)=\Theta(\lambda_s(t))\) is differentiable on a neighborhood \(\mathcal{I}^\ast\subseteq\mathcal{I}_\ell\) of \(t^\ast\). Differentiating \(F(\theta(t),\lambda_s(t))=0\) with respect to time gives \(\frac{\partial F}{\partial \theta}\dot{\theta}(t)+\frac{\partial F}{\partial \lambda_s}\dot{\lambda}_s(t)=0\). Since \(H(t)=\partial F/\partial\theta\) is nonsingular, solving for \(\dot{\theta}(t)\) gives \(\dot{\theta}(t)=-H(t)^{-1}J(t)\dot{\lambda}_s(t)\). Using \(v_s(t)=-\dot{\lambda}_s(t)\) yields \eqref{eq:boundary_parameter_dynamics}.
\end{proof}
Lemma~\ref{lem:theta_differentiable} shows that, on fixed-active-set intervals, the SVM boundary parameters evolve differentiably with \(\lambda_s(t)\), yielding a well-defined time derivative of the learned decision boundary.
\begin{remark}
The update law \eqref{eq:boundary_parameter_dynamics} is local to a fixed-active-set interval. When a KKT boundary is reached, the active set may change and a new implicit system of the form \eqref{eq:implicit_kkt_map} is initialized. Hence, the full decremental SVM evolution is piecewise smooth in time, and active-set transitions are addressed separately through the homotopy-smoothed barrier construction used in the safety-filter design.
\end{remark}
\subsection{Time-Varying Decision Boundary as a Candidate Control Barrier Function}
\label{subsec:tv_decision_boundary_cbf}
On each fixed-active-set interval \(\mathcal{I}_\ell=[\tau_\ell,\tau_{\ell+1})\), the active margin set \(\mathcal{M}_\ell=\{m_1,\ldots,m_{|\mathcal{M}_\ell|}\}\) remains fixed. The selected decremental index \(c\) is excluded from \(\mathcal{M}_\ell\) and treated as the degradation-driven parameter. Let $\mathcal{F}_\ell$ denote the set of non-margin indices with nonzero SVM coefficients whose coefficients remain fixed on $\mathcal{I}_\ell$, i.e., 
$\mathcal{F}_\ell
=
\left\{
k:\alpha_k\neq 0,\;
k\notin\mathcal{M}_\ell\cup\{c\}
\right\}$.
The zero-coefficient reserve samples are omitted since they do not contribute to the SVM decision function. The corresponding learned SVM decision function is
\begin{equation}
\begin{aligned}
    h_\ell(x,t)
    &=
    \sum_{j\in\mathcal{M}_\ell}
    \alpha_j^\ell(t)y_jK(x_j,x)
    +
    \varphi_c(\lambda_s(t))y_cK(x_c,x)  \\
    &\quad
    +
    \sum_{k\in\mathcal{F}_\ell}
    \alpha_k y_kK(x_k,x)
    +
    b_\ell(t),
    \qquad t\in \mathcal{I}_\ell.
\end{aligned}
    \label{eq:local_tv_svm_decision_function}
\end{equation}
Here, the terms correspond to the active margin contribution, the reduced selected support-vector contribution, the fixed nonzero-coefficient contribution, and the bias \(b_\ell(t)\), respectively. The associated local degradation-aware safe set is
\begin{equation}
    \widehat {\mathcal{S}}_\ell(t)
    =
    \left\{
    x\in\mathbb{R}^n:
    h_\ell(x,t)\geq 0
    \right\},
    \qquad t\in I_\ell.
    \label{eq:local_tv_safe_set}
\end{equation}
Thus, $h_\ell(x,t)$ describes the learned safe boundary only on the interval over which the active-set structure is fixed. For \(t\in\mathcal{I}_\ell\), the active parameter vector is defined as \(\theta_\ell(t)=\begin{bmatrix} b_\ell(t) & \alpha_{\mathcal{M}_\ell}^\ell(t) \end{bmatrix}^{\top}\).
By Lemma~\ref{lem:theta_differentiable}, $\theta_\ell(t)$ is differentiable on $I_\ell$ and satisfies
\begin{equation}
    \dot{\theta}_\ell(t)
    =
    H_\ell(t)^{-1}J_\ell(t)v_s(t),
    \quad t\in I_\ell,
    \label{eq:theta_ell_dot}
\end{equation}
where $H_\ell(t)$ and $J_\ell(t)$ denote the matrices for the fixed active set $\mathcal{M}_\ell$. Moreover, since \(\alpha_c(t)=\varphi_c(\lambda_s(t))\), we have
\begin{equation}
    \dot{\alpha}_c(t)
    =
    \varphi_c'(\lambda_s(t))\dot{\lambda}_s(t)
    =
    -\varphi_c'(\lambda_s(t))v_s(t).
    \label{eq:alpha_c_dot_vs}
\end{equation}
Therefore, for each fixed state $x$, the explicit time dependence of $h_\ell(x,t)$ is induced by the evolution of the active SVM coefficients, the selected decremental coefficient, and the bias. Differentiating \eqref{eq:local_tv_svm_decision_function} with respect to time gives
\begin{equation}
\begin{aligned}
    \frac{\partial h_\ell(x,t)}{\partial t}
    &=
    \sum_{j\in\mathcal{M}_\ell}
    \dot{\alpha}_j^\ell(t)y_jK(x_j,x)
    +
    \dot{\alpha}_c(t)y_c
    \\
    &K(x_c,x)
    +
    \dot{b}_\ell(t).
\end{aligned}
    \label{eq:h_ell_t_general}
\end{equation}
The coefficients associated with $\mathcal{F}_\ell$ do not contribute to \eqref{eq:h_ell_t_general} because they are held constant on $I_\ell$. Let \(\psi_\ell(t)=H_\ell(t)^{-1}J_\ell(t)\), with components indexed consistently with \(\theta_\ell(t)=\begin{bmatrix} b_\ell(t) & \alpha_{m_1}^\ell(t) & \cdots & \alpha_{m_{|\mathcal{M}_\ell|}}^\ell(t) \end{bmatrix}^{\top}\). Then \eqref{eq:theta_ell_dot} implies
\begin{equation}
    \dot{b}_\ell(t)
    =
    \big(\psi_\ell(t)\big)_0 v_s(t),
    \label{eq:b_ell_dot_component}
\end{equation}
where the subscript $0$ denotes the first component of $\psi_\ell(t)$, corresponding to the bias term $b_\ell(t)$. Also, for each $r=1,\ldots,|\mathcal{M}_\ell|$ with $m_r\in\mathcal{M}_\ell$,
\begin{equation}
    \dot{\alpha}_{m_r}^\ell(t)
    =
    \big(\psi_\ell(t)\big)_r v_s(t).
    \label{eq:alpha_ell_dot_component}
\end{equation}
Substituting \eqref{eq:alpha_c_dot_vs}, \eqref{eq:b_ell_dot_component}, and \eqref{eq:alpha_ell_dot_component} into \eqref{eq:h_ell_t_general} yields
\begin{equation}
    \frac{\partial h_\ell(x,t)}{\partial t}
    =
    \Phi_\ell(x,t)v_s(t),
    \label{eq:h_ell_t_vs_form}
\end{equation}
where
\begin{equation}
\begin{aligned}
    \Phi_\ell(x,t)
    &=
    \sum_{r=1}^{|\mathcal{M}_\ell|}
    \big(\psi_\ell(t)\big)_r
    y_{m_r}K(x_{m_r},x)  \\
    &\quad
    +
    \big(\psi_\ell(t)\big)_0
    -
    \varphi_c'(\lambda_s(t))y_cK(x_c,x).
\end{aligned}
\label{eq:phi_ell_def}
\end{equation}
The function \(\Phi_\ell(x,t)\) gives the boundary sensitivity to \(\lambda_s(t)\) on \(\mathcal{I}_\ell\), while \(v_s(t)\) scales the resulting boundary motion.

\begin{lemma}
\label{lem:h_ell_regular_tv}
Let $I_\ell=[\tau_\ell,\tau_{\ell+1})$ be a fixed-active-set interval, and let $h_\ell(x,t)$ be defined by \eqref{eq:local_tv_svm_decision_function}. Suppose the conditions of Lemma~\ref{lem:theta_differentiable} hold on $I_\ell$. If there exist constants $M_\theta>0$, $M_s>0$, and $M_c>0$ such that
\begin{equation}
\begin{aligned}
    \|H_\ell(t)^{-1}J_\ell(t)\|
    &\leq M_\theta,\\
    |\dot{\lambda}_s(t)|
    &\leq M_s,\\
    |\varphi_c'(\lambda_s(t))|
    &\leq M_c,
    \qquad \forall t\in I_\ell.
\end{aligned}
    \label{eq:bounded_HJ_lambdasdot_ell}
\end{equation}
then the following properties hold on $I_\ell$:
\begin{enumerate}
    \item for each fixed $t\in I_\ell$, the map $x\mapsto h_\ell(x,t)$ is $C^\infty$;
    \item for each fixed $x$, the partial derivative $\partial h_\ell(x,t)/\partial t$ exists and is given by \eqref{eq:h_ell_t_general}--\eqref{eq:h_ell_t_vs_form};
    \item for every compact set $\mathcal{X}_c\subset\mathbb{R}^n$ and every compact subinterval $I_c\subset I_\ell$, $h_\ell(\cdot,t)$ is Lipschitz in $x$ uniformly in $t\in I_c$;
    \item $\partial h_\ell(x,t)/\partial t$ is bounded on $\mathcal{X}_c\times I_c$ for every compact set $\mathcal{X}_c\subset\mathbb{R}^n$ and every compact subinterval $I_c\subset I_\ell$.
\end{enumerate}
\end{lemma}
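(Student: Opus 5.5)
The plan is to handle the four properties in order, reusing the structure of the proof of Lemma~\ref{lem:h0_regular} for the spatial claims and Lemma~\ref{lem:theta_differentiable} for the temporal ones.

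For item 1, fix $t\in I_\ell$. Then $h_\ell(\cdot,t)$ in \eqref{eq:local_tv_svm_decision_function} is a finite linear combination of the RBF kernels $K(x_j,\cdot)$ with real coefficients $\alpha_j^\ell(t)$, $\varphi_c(\lambda_s(t))$ and $\alpha_k$, plus the constant $b_\ell(t)$. Smoothness in $x$ therefore follows exactly as in Lemma~\ref{lem:h0_regular}.

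For item 2, fix $x$. By Lemma~\ref{lem:theta_differentiable}, $b_\ell(t)$ and $\alpha_j^\ell(t)$ are differentiable on a neighborhood of each $t^\ast\in I_\ell$, hence on all of $I_\ell$. Moreover, $t\mapsto\varphi_c(\lambda_s(t))$ is differentiable by the chain rule, since Assumption~\ref{ass:fixed_active_set_interval} gives $\varphi_c\in C^1$ and $\lambda_s\in C^1$. The coefficients $\alpha_k$, $k\in\mathcal{F}_\ell$, are constant. Term-by-term differentiation of the finite sum then gives \eqref{eq:h_ell_t_general}. Substituting \eqref{eq:theta_ell_dot} and \eqref{eq:alpha_c_dot_vs} yields \eqref{eq:h_ell_t_vs_form} with $\Phi_\ell$ as in \eqref{eq:phi_ell_def}.

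For items 3 and 4, the key intermediate step is to show that all time-varying coefficients are bounded on a compact $I_c=[t_1,t_2]\subset I_\ell$.
\begin{itemize}
\item For the margin coefficients and the bias, integrate: $\|\theta_\ell(t)-\theta_\ell(t_1)\|\le\int_{t_1}^{t}\|H_\ell^{-1}J_\ell\|\,|\dot\lambda_s|\,ds\le M_\theta M_s(t_2-t_1)$, using $v_s=-\dot\lambda_s$. Hence $|b_\ell(t)|$ and $|\alpha_j^\ell(t)|$ are bounded on $I_c$ by some $\bar\theta$.
\item The coefficient $\varphi_c(\lambda_s(t))$ is bounded because $\varphi_c$ is continuous on the compact set $[\lambda_{\min},1]$.
\end{itemize}

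For item 3, write $\nabla_x h_\ell(x,t)$ as in \eqref{eq:grad_h0}. Use $\|\nabla_x K(x,x_i)\|\le 2\gamma\|x-x_i\|\le G$ on $\mathcal{X}_c$, which holds since the kernel factor is at most $1$. Together with the coefficient bounds, this gives $\sup_{\mathcal{X}_c\times I_c}\|\nabla_x h_\ell\|\le L<\infty$, independent of $t$. The mean value theorem then yields uniform Lipschitz continuity. Strictly, this needs $\mathcal{X}_c$ convex, or one passes to its convex hull, which is still compact; I will note this.

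For item 4, use $0<K\le 1$ and $\|(\psi_\ell)\|\le M_\theta$ in \eqref{eq:phi_ell_def}. This gives $|\Phi_\ell(x,t)|\le M_\theta(|\mathcal{M}_\ell|+1)+M_c$. Combined with $|v_s|=|\dot\lambda_s|\le M_s$, we obtain $|\partial_t h_\ell|\le M_s\big(M_\theta(|\mathcal{M}_\ell|+1)+M_c\big)$. This bound even holds uniformly on $\mathbb{R}^n\times I_\ell$.

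I expect the main obstacle to be item 3, specifically the uniform-in-$t$ bound on the coefficients $b_\ell,\alpha_j^\ell$. The hypotheses bound only their derivatives, not their values. The integration argument above resolves this on compact $I_c$ and is the reason the statement is restricted to compact subintervals. It requires $\theta_\ell$ to be absolutely continuous on $I_c$, which follows from its $C^1$ regularity, obtained because $\Theta$ from the implicit function theorem is $C^1$.
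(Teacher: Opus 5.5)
Your proof is correct and follows the paper's argument item by item; in particular, your bound on $\partial_t h_\ell$ via $\Phi_\ell$, $0<K\le 1$, $\|\psi_\ell\|\le M_\theta$ and $|v_s|\le M_s$ is the same estimate the paper obtains from $\|\dot\theta_\ell\|\le M_\theta M_s$ and $|\dot\alpha_c|\le M_cM_s$. The only deviation is in item 3, where the paper bounds $b_\ell$ and $\alpha_j^\ell$ on $I_c$ simply because $\theta_\ell$ is continuous on a compact interval, whereas you integrate the rate bound. Your route is equally valid and even gives the bound on the whole bounded interval $I_\ell$, so, contrary to your closing remark, it does not rely on compactness of $I_c$, and the coefficient bound is not really an obstacle. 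Your convex-hull caveat for the mean value theorem also correctly patches a step the paper passes over.
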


\begin{proof}
Since $I_\ell$ is a fixed-active-set interval, the index sets $\mathcal{M}_\ell$ and $\mathcal{F}_\ell$ are constant on $I_\ell$. Hence, for each fixed $t\in I_\ell$, $h_\ell(x,t)$ is a finite linear combination of RBF kernels plus a scalar bias term. Because the RBF kernel is $C^\infty$ in $x$, it follows that $h_\ell(\cdot,t)$ is $C^\infty$. For the second statement, Lemma~\ref{lem:theta_differentiable} implies that the active parameter vector $\theta_\ell(t)$ is differentiable on $I_\ell$. In addition, $\alpha_c(t)=\varphi_c(\lambda_s(t))$ is differentiable because $\varphi_c$ and $\lambda_s(t)$ are continuously differentiable. Therefore, differentiating \eqref{eq:local_tv_svm_decision_function} with respect to time gives \eqref{eq:h_ell_t_general}. Substituting the parameter dynamics \eqref{eq:theta_ell_dot} and \eqref{eq:alpha_c_dot_vs} yields \eqref{eq:h_ell_t_vs_form}. For the third statement, fix a compact set $\mathcal{X}_c\subset\mathbb{R}^n$ and a compact subinterval $I_c\subset I_\ell$. The gradient of $h_\ell$ with respect to $x$ is
\begin{equation}
\begin{aligned}
    \nabla_x h_\ell(x,t)
    &=
    \sum_{j\in\mathcal{M}_\ell}
    \alpha_j^\ell(t)y_j\nabla_x K(x_j,x)
    +
    \varphi_c(\lambda_s(t))y_c \\
    &\quad
    \nabla_x K(x_c,x)
    +
    \sum_{k\in\mathcal{F}_\ell}
    \alpha_k y_k\nabla_x K(x_k,x).
\end{aligned}
    \label{eq:grad_h_ell}
\end{equation}
For each fixed index, the function $\nabla_xK$ is continuous in $x$, and hence bounded on $\mathcal{X}_c$ \cite{rudin2021principles}. Since $\theta_\ell(t)$ is differentiable, it is continuous. Therefore, its components are bounded on the compact interval $I_c$. The coefficient $\varphi_c(\lambda_s(t))$ is also continuous and bounded on $I_c$, and the fixed coefficients $\alpha_k$, $k\in\mathcal{F}_\ell$, are constant on $I_\ell$. Since $\mathcal{M}_\ell$ and $\mathcal{F}_\ell$ are finite, there exists a constant $L_{\ell,c}>0$ such that
\begin{equation}
    \sup_{(x,t)\in\mathcal{X}_c\times I_c}
    \|\nabla_x h_\ell(x,t)\|
    \leq L_{\ell,c}.
\end{equation}
By the mean value theorem,
\begin{equation}
    |h_\ell(x_1,t)-h_\ell(x_2,t)|
    \leq
    L_{\ell,c}\|x_1-x_2\|,
    \quad
    \forall x_1,x_2\in\mathcal{X}_c.
\end{equation}
Thus, $h_\ell(\cdot,t)$ is Lipschitz in $x$ uniformly in $t\in I_c$. For the fourth statement, using \eqref{eq:h_ell_t_general} and the RBF kernel bound $0<K(x_j,x)\leq 1$, we obtain
\begin{equation}
    \left|
    \frac{\partial h_\ell(x,t)}{\partial t}
    \right|
    \leq
    \sum_{j\in\mathcal{M}_\ell}
    |\dot{\alpha}_j^\ell(t)|
    +
    |\dot{\alpha}_c(t)|
    +
    |\dot{b}_\ell(t)|.
    \label{eq:h_ell_t_bound}
\end{equation}
From \eqref{eq:theta_ell_dot} and \eqref{eq:bounded_HJ_lambdasdot_ell},
\begin{equation}
    \|\dot{\theta}_\ell(t)\|
    \leq
    \|H_\ell(t)^{-1}J_\ell(t)\|\,|\dot{\lambda}_s(t)|
    \leq
    M_\theta M_s,
    \qquad t\in I_\ell.
\end{equation}
Thus, all components of $\dot{\theta}_\ell(t)$, including $\dot{b}_\ell(t)$ and $\dot{\alpha}_j^\ell(t)$, are bounded on $I_\ell$. Also,
\begin{equation}
    |\dot{\alpha}_c(t)|
    =
    |\varphi_c'(\lambda_s(t))|\,|\dot{\lambda}_s(t)|
    \leq
    M_cM_s.
\end{equation}
Substituting these bounds into \eqref{eq:h_ell_t_bound} yields a bound on $\partial h_\ell(x,t)/\partial t$ over $\mathcal{X}_c\times I_c$. This proves the fourth statement.
\end{proof}
Lemma~\ref{lem:h_ell_regular_tv} shows that \(h_\ell(x,t)\) is smooth in \(x\) and differentiable in \(t\) on each fixed-active-set interval, making it a local candidate time-varying barrier function. However, active-set switching can change the SVM boundary representation and prevent global smoothness over the full horizon. The next subsection introduces a homotopy-smoothed barrier construction that connects consecutive fixed-active-set representations and preserves the regularity needed for the SVM-CBF-QP safety filter.
\section{Time-Varying SVM-CBF-QP Safety Filter Design}
\label{sec:safety_filter_design}
\subsection{Data-Driven Homotopy-Smoothed Time-Varying SVM-CBF}
\label{subsec:hst_svm_cbf_condition}
The fixed-active-set functions \(h_\ell(x,t)\) define the learned SVM boundary only while the active set remains unchanged. Since switching directly from \(h_\ell(x,t)\) to \(h_{\ell+1}(x,t)\) can make the barrier discontinuous and its time derivative ill-defined, active-set transitions are treated using a novel homotopy-smoothed time-varying SVM barrier \(h_{\mathrm{H}}:\mathbb{R}^n\times\mathbb{R}_{\geq 0}\to\mathbb{R}\). The function \(h_{\mathrm{H}}\) coincides with the local SVM decision function away from active-set transitions and smoothly interpolates between consecutive boundary representations during switching, as illustrated in Fig.~\ref{fig:homotopy}.
\begin{figure}[htb]
    \begin{center}
        \includegraphics[width=1\linewidth]{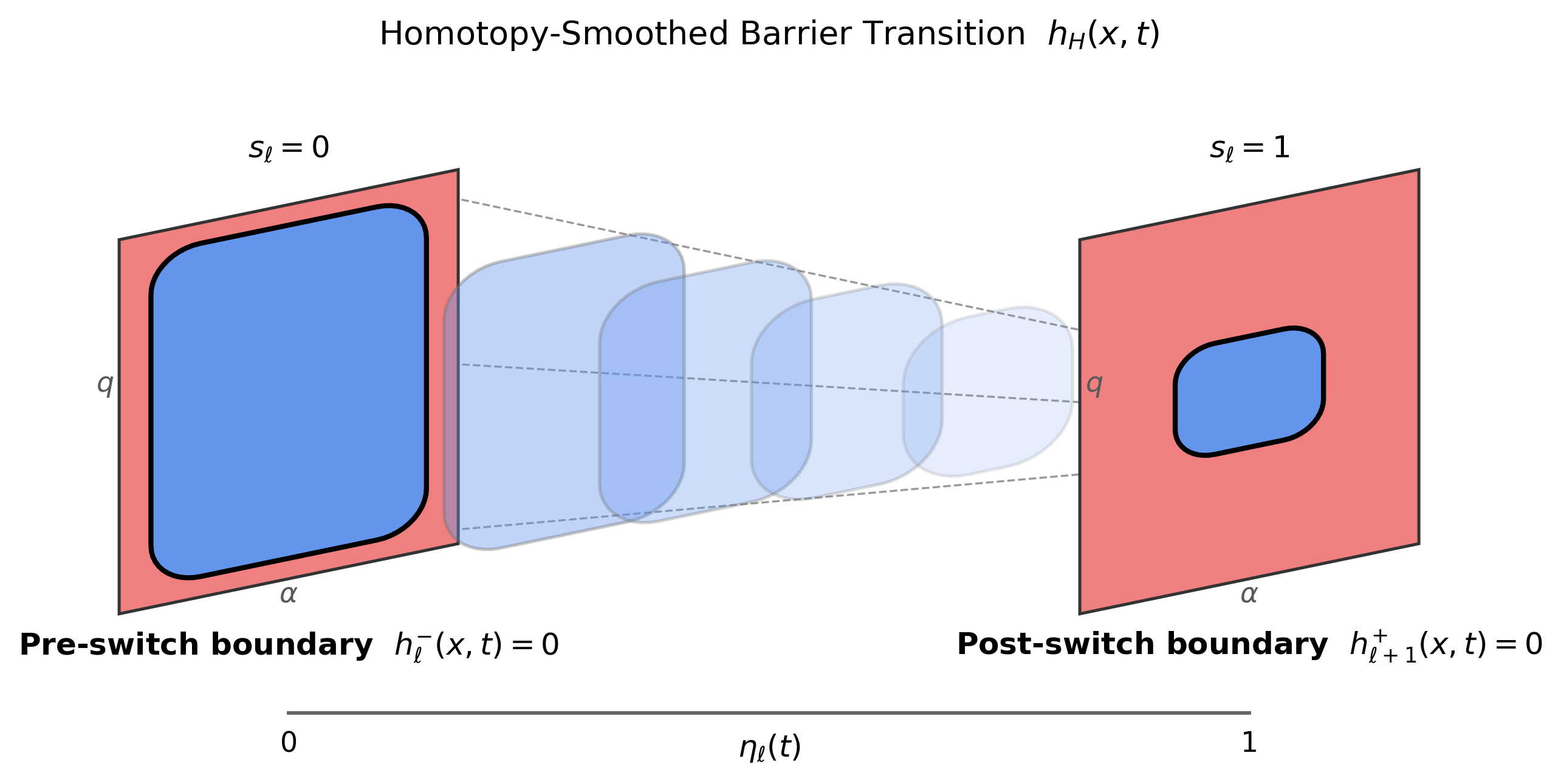}
    \end{center}
    \caption{Homotopy-smoothed barrier transition. As $\eta_\ell(t)$ increases from $0$ to $1$, the safe-set boundary transitions continuously from $h_\ell^-(x,t)$ to $h_{\ell+1}^+(x,t)$, yielding a contraction of the safe set in the $(\alpha,q)$ plane.}
    \label{fig:homotopy}
\end{figure}
The resulting safety guarantee is with respect to the homotopy-smoothed learned safe set enforced by the controller. Let \(\tau_{\ell+1}\) denote a switching time at which the local representation changes from \(h_\ell\) to \(h_{\ell+1}\). Define the transition window \(\mathcal{T}_\ell=[\tau_{\ell+1},\,\tau_{\ell+1}+T_\ell^{\mathrm{tr}}]\), where \(T_\ell^{\mathrm{tr}}>0\) is a design parameter. The transition windows are assumed to be nonoverlapping. Let \(s_\ell(t)=\frac{t-\tau_{\ell+1}}{T_\ell^{\mathrm{tr}}}\) for \(t\in\mathcal{T}_\ell\),
so that $s_\ell(t)\in[0,1]$ during the transition. An admissible homotopy function \(\eta:[0,1]\to[0,1]\) is introduced such that \(\eta\in C^1([0,1])\), \(\eta(0)=0\), \(\eta(1)=1\), and \(\eta'(0)=\eta'(1)=0\). The transition weight is then defined as \(\eta_\ell(t)=\eta(s_\ell(t))\) for \(t\in\mathcal{T}_\ell\).
If higher-order barrier constructions are required, $\eta$ can be chosen from $C^q([0,1])$ with matching derivatives up to order $q$ at the endpoints. Let $h_\ell^{-}(x,t)$ denote a continuously differentiable extension of the pre-switch boundary representation over $\mathcal{T}_\ell$, and let $h_{\ell+1}^{+}(x,t)$ denote the post-switch boundary representation initialized after the active-set update. During $\mathcal{T}_\ell$, the homotopy-smoothed SVM barrier is defined by
\begin{equation}
    h_{\mathrm{H}}(x,t)
    =
    \big(1-\eta_\ell(t)\big)h_\ell^{-}(x,t)
    +
    \eta_\ell(t)h_{\ell+1}^{+}(x,t),
    \qquad
    t\in\mathcal{T}_\ell.
    \label{eq:hH_homotopy_transition}
\end{equation}
Outside transition windows, \(h_{\mathrm{H}}\) coincides with the current fixed-active-set SVM decision function,
\begin{equation}
    h_{\mathrm{H}}(x,t)
    =
    h_\ell(x,t),
    \qquad
    t\in I_\ell\setminus \mathcal{T}_{\ell-1}.
    \label{eq:hH_fixed_interval}
\end{equation}
The corresponding homotopy-smoothed learned safe set is defined as \(\widehat{\mathcal{S}}_{\mathrm{H}}(t)=\{x\in\mathbb{R}^n:h_{\mathrm{H}}(x,t)\geq 0\}\). The time derivative of $h_{\mathrm{H}}$ during a transition window is
\begin{align}
    \frac{\partial h_{\mathrm{H}}(x,t)}{\partial t}
    &=
    \big(1-\eta_\ell(t)\big)
    \frac{\partial h_\ell^{-}(x,t)}{\partial t}
    +
    \eta_\ell(t)
    \frac{\partial h_{\ell+1}^{+}(x,t)}{\partial t}
    \notag\\
    &\quad
    +
    \dot{\eta}_\ell(t)
    \Big(
    h_{\ell+1}^{+}(x,t)-h_\ell^{-}(x,t)
    \Big),
    \qquad
    t\in\mathcal{T}_\ell.
    \label{eq:hH_time_derivative_transition}
\end{align}
The first two terms in \eqref{eq:hH_time_derivative_transition} capture the time variation of the pre- and post-switch SVM boundaries, while the last term captures transition-induced boundary motion. This last term quantifies how rapidly the barrier is moved from the pre-switch representation to the post-switch representation. On fixed-active-set intervals where no transition is active, $h_{\mathrm{H}}(x,t)=h_\ell(x,t)$. Hence, using \eqref{eq:h_ell_t_vs_form},
\begin{equation}
    \frac{\partial h_{\mathrm{H}}(x,t)}{\partial t}
    =
    \frac{\partial h_\ell(x,t)}{\partial t}
    =
    \Phi_\ell(x,t)v_s(t),
    \qquad
    t\in I_\ell\setminus \mathcal{T}_{\ell-1}.
    \label{eq:hH_time_derivative_fixed_interval}
\end{equation}
Thus, on fixed-active-set intervals, the boundary motion is governed by the SVM scheduling rate $v_s(t)$. The transition duration $T_\ell^{\mathrm{tr}}$ controls the rate at which the barrier moves between consecutive active-set representations. Since \(\dot{\eta}_\ell(t)=\eta'(s_\ell(t))/T_\ell^{\mathrm{tr}}\), let \(\bar{\eta}_1=\max_{s\in[0,1]}|\eta'(s)|\). For any compact set \(\mathcal{X}_c\subset\mathbb{R}^n\), define the transition mismatch bound \(\Delta_{\ell,c}=\sup_{(x,t)\in\mathcal{X}_c\times\mathcal{T}_\ell}\left|h_{\ell+1}^{+}(x,t)-h_\ell^{-}(x,t)\right|\). Then the transition-induced component of $\partial h_{\mathrm{H}}/\partial t$ satisfies
\begin{equation}
\left|
\dot{\eta}_\ell(t)
\Big(
h_{\ell+1}^{+}(x,t)-h_\ell^{-}(x,t)
\Big)
\right|
\leq
\frac{\bar{\eta}_1\Delta_{\ell,c}}{T_\ell^{\mathrm{tr}}},
\quad
(x,t)\in\mathcal{X}_c\times\mathcal{T}_\ell.
\label{eq:transition_speed_bound}
\end{equation}
The transition duration \(T_\ell^{\mathrm{tr}}\) is selected to keep the transition-induced boundary motion compatible with the degraded input authority. Smaller \(T_\ell^{\mathrm{tr}}\) yields faster switching but can make the CBF constraint more restrictive, while larger \(T_\ell^{\mathrm{tr}}\) reduces the derivative term at the cost of slower convergence to the post-switch representation. This feasibility effect is captured in the QP through \(\partial h_{\mathrm{H}}/\partial t\). Since \(h_{\mathrm{H}}(x,t)\) is time-varying, its total derivative along trajectories of \eqref{dyna} is
\begin{equation}
    \dot{h}_{\mathrm{H}}(x,t)
    =
    L_fh_{\mathrm{H}}(x,t)
    +
    L_gh_{\mathrm{H}}(x,t)u
    +
    \frac{\partial h_{\mathrm{H}}(x,t)}{\partial t},
    \label{eq:hH_dot_lie}
\end{equation}
where \(L_fh_{\mathrm{H}}(x,t)=\frac{\partial h_{\mathrm{H}}(x,t)}{\partial x}f(x)\) and \(L_gh_{\mathrm{H}}(x,t)=\frac{\partial h_{\mathrm{H}}(x,t)}{\partial x}g(x)\).
Let $\kappa:\mathbb{R}\to\mathbb{R}$ be a locally Lipschitz extended class-$\mathcal{K}$ function. The homotopy-smoothed SVM barrier $h_{\mathrm{H}}(x,t)$ is said to satisfy the time-varying CBF condition if, for each $x\in\widehat{\mathcal{S}}_{\mathrm{H}}(t)$, there exists a control input $u\in\mathcal{U}(t)$ such that
\begin{equation}
    L_fh_{\mathrm{H}}(x,t)
    +
    L_gh_{\mathrm{H}}(x,t)u
    +
    \frac{\partial h_{\mathrm{H}}(x,t)}{\partial t}
    \geq
    -\kappa\big(h_{\mathrm{H}}(x,t)\big).
    \label{eq:hH_tvcbf_condition}
\end{equation}
On fixed-active-set intervals where no transition is active, \eqref{eq:hH_tvcbf_condition} reduces to
\begin{equation}
    L_fh_\ell(x,t)
    +
    L_gh_\ell(x,t)u
    +
    \Phi_\ell(x,t)v_s(t)
    \geq
    -\kappa\big(h_\ell(x,t)\big).
    \label{eq:hH_tvcbf_condition_fixed_interval}
\end{equation}
During active-set transition windows, the derivative term is instead given by \eqref{eq:hH_time_derivative_transition}. The following lemma summarizes the regularity induced by the homotopy-smoothed construction.
\begin{lemma}
\label{lem:hH_regularity}
Suppose the active-set switching times \(\{\tau_\ell\}\) are locally finite, the transition windows \(\mathcal{T}_\ell=[\tau_{\ell+1},\tau_{\ell+1}+T_\ell^{\mathrm{tr}}]\) are nonoverlapping, and \(h_\ell\), \(h_\ell^{-}\), and \(h_{\ell+1}^{+}\) are \(C^1\) in \((x,t)\) on their respective domains. Let \(\eta\in C^1([0,1];[0,1])\) be nondecreasing with \(\eta(0)=0\), \(\eta(1)=1\), and \(\eta'(0)=\eta'(1)=0\). Let the transition extensions satisfy \(h_\ell^{-}(x,\tau_{\ell+1})=h_\ell(x,\tau_{\ell+1}^{-})\) and \(h_{\ell+1}^{+}(x,\tau_{\ell+1}+T_\ell^{\mathrm{tr}})=h_{\ell+1}(x,\tau_{\ell+1}+T_\ell^{\mathrm{tr}})\). Then \(h_{\mathrm{H}}\) is continuous in \((x,t)\), piecewise \(C^1\) in time, and continuously differentiable with respect to \(x\) on each smooth piece. Moreover, \(h_{\mathrm{H}}\) admits a time derivative for almost all \(t\), and for every absolutely continuous trajectory \(x(\cdot)\), the map \(t\mapsto h_{\mathrm{H}}(x(t),t)\) is absolutely continuous on compact time intervals and differentiable everywhere.
\end{lemma}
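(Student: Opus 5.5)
The plan is to partition the time axis into finitely many pieces on each compact horizon and check regularity piece by piece, then patch at the junctions. Local finiteness of $\{\tau_\ell\}$ together with nonoverlapping windows means any compact interval $[t_0,T]$ meets only finitely many fixed-interval pieces $I_\ell\setminus\mathcal{T}_{\ell-1}$ and transition windows $\mathcal{T}_\ell$. On a fixed piece, $h_{\mathrm{H}}=h_\ell$ by \eqref{eq:hH_fixed_interval}, which is $C^1$ in $(x,t)$ by hypothesis and consistent with Lemma~\ref{lem:h_ell_regular_tv}. On $\mathcal{T}_\ell$, \eqref{eq:hH_homotopy_transition} is a convex combination of the $C^1$ functions $h_\ell^-$ and $h_{\ell+1}^+$ with the $C^1$ weight $\eta_\ell(t)=\eta(s_\ell(t))$, where $s_\ell$ is affine. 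Hence $h_{\mathrm{H}}$ is $C^1$ on the interior of each piece, with time derivative \eqref{eq:hH_time_derivative_transition}. In particular, $h_{\mathrm{H}}$ is continuously differentiable in $x$ on each smooth piece.

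Continuity across the junctions comes from the endpoint conditions. At $t=\tau_{\ell+1}$, $\eta(0)=0$ gives $h_{\mathrm{H}}(x,\tau_{\ell+1})=h_\ell^-(x,\tau_{\ell+1})=h_\ell(x,\tau_{\ell+1}^-)$, so the values from the left and right agree. At $t=\tau_{\ell+1}+T_\ell^{\mathrm{tr}}$, $\eta(1)=1$ gives $h_{\mathrm{H}}=h_{\ell+1}^+=h_{\ell+1}$, which matches the following fixed piece. Joint continuity in $(x,t)$ follows because each piece is jointly continuous on its closure and the one-sided limits agree uniformly on compact $x$-sets. One point to check is that $h_\ell(x,\tau_{\ell+1}^-)$ exists: I would obtain it by extending $h_\ell$ continuously to the closed interval, using the bounded $\partial_t h_\ell$ from Lemma~\ref{lem:h_ell_regular_tv}.

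Next, fix an absolutely continuous $x(\cdot)$ and set $\phi(t)=h_{\mathrm{H}}(x(t),t)$. On each piece, $h_{\mathrm{H}}$ is $C^1$, so the composition of a $C^1$ function with the absolutely continuous map $t\mapsto(x(t),t)$ is absolutely continuous on compact subintervals. Its derivative is $\nabla_x h_{\mathrm{H}}\cdot\dot x+\partial_t h_{\mathrm{H}}$ a.e., which is bounded by the local Lipschitz constants together with \eqref{eq:transition_speed_bound}. A continuous function that is absolutely continuous on finitely many adjacent closed pieces is absolutely continuous on their union, which gives absolute continuity on compact intervals. It also shows that $h_{\mathrm{H}}$ admits a time derivative for almost all $t$, since the junctions form a null set.

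The main obstacle is the final claim of differentiability everywhere. At the junctions the one-sided derivatives must match. From \eqref{eq:hH_time_derivative_transition} and $\eta'(0)=\eta'(1)=0$, the transition term $\dot\eta_\ell(h_{\ell+1}^+-h_\ell^-)$ vanishes at both endpoints. This leaves $\partial_t h_\ell^-$ at $\tau_{\ell+1}$ and $\partial_t h_{\ell+1}^+$ at $\tau_{\ell+1}+T_\ell^{\mathrm{tr}}$. For the pre-switch endpoint, I would use that $h_\ell^-$ is a $C^1$ extension of $h_\ell$, so its derivative agrees with $\partial_t h_\ell(\cdot,\tau_{\ell+1}^-)$. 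For the post-switch endpoint, I would read ``initialized after the active-set update'' as saying that $h_{\ell+1}^+$ is the post-switch representation itself, so it coincides with $h_{\ell+1}$ near the window end. The stated hypotheses only match values, not derivatives, so if that reading is not granted I would state differentiability at the junctions under this derivative-matching interpretation and otherwise only in the one-sided sense. Granted the matching, the chain rule at each junction then yields differentiability of $\phi$ at every $t$ at which $x(\cdot)$ is differentiable, since matched one-sided derivatives of a continuous function give a two-sided derivative.
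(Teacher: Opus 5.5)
Your proposal is correct and follows the paper's proof essentially step for step: a finite decomposition of each compact horizon into fixed-active-set pieces and transition windows, $C^1$ regularity on each piece, continuity at the junctions from $\eta(0)=0$, $\eta(1)=1$ and the endpoint matching conditions, and concatenation of the absolutely continuous pieces of $t\mapsto h_{\mathrm H}(x(t),t)$. Your caveat on the final clause is justified. The paper's own proof only establishes differentiability for almost all $t$, which is all Theorem~\ref{thm:forward_invariance_tvcbf} uses. ``Differentiable everywhere'' cannot hold under value-only matching at the junctions, nor for a general absolutely continuous $x(\cdot)$. Your restricted version, which assumes derivative matching and claims differentiability only at times where $x(\cdot)$ is differentiable, is the correct form of that claim.
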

\begin{proof}
Let \([a,b]\subset\mathbb{R}_{\geq 0}\) be an arbitrary compact time interval. By local finiteness of the switching times and nonoverlap of the transition windows, \([a,b]\) can be decomposed into a finite union \([a,b]=\bigcup_{q=1}^{N_c}\mathcal{J}_q\) such that, on each \(\mathcal{J}_q\), either \(h_{\mathrm{H}}=h_\ell\) for some fixed-active-set representation or \(h_{\mathrm{H}}(x,t)=(1-\eta_\ell(t))h_\ell^{-}(x,t)+\eta_\ell(t)h_{\ell+1}^{+}(x,t)\). Since \(h_\ell\), \(h_\ell^{-}\), and \(h_{\ell+1}^{+}\) are \(C^1\) on their respective domains and \(\eta_\ell\in C^1\), \(h_{\mathrm{H}}\) is \(C^1\) on each smooth piece \(\mathcal{J}_q\). Continuity across transition endpoints follows from the endpoint matching conditions. At the beginning of the transition, \(\eta_\ell(\tau_{\ell+1})=0\), so \(h_{\mathrm{H}}(x,\tau_{\ell+1})=h_\ell^{-}(x,\tau_{\ell+1})=h_\ell(x,\tau_{\ell+1}^{-})\). At the end of the transition, \(\eta_\ell(\tau_{\ell+1}+T_\ell^{\mathrm{tr}})=1\), so \(h_{\mathrm{H}}(x,\tau_{\ell+1}+T_\ell^{\mathrm{tr}})=h_{\ell+1}^{+}(x,\tau_{\ell+1}+T_\ell^{\mathrm{tr}})=h_{\ell+1}(x,\tau_{\ell+1}+T_\ell^{\mathrm{tr}})\). Hence, \(h_{\mathrm{H}}\) is continuous across transition endpoints. Since only finitely many such endpoints occur on \([a,b]\), \(h_{\mathrm{H}}\) is continuous on \(\mathbb{R}^n\times[a,b]\), piecewise \(C^1\) in time, and admits a time derivative except possibly at finitely many endpoints.

Now let \(x(\cdot)\in AC([a,b];\mathbb{R}^n)\) and define \(s(t)=h_{\mathrm{H}}(x(t),t)\). On each \(\mathcal{J}_q\), \(h_{\mathrm{H}}\) is \(C^1\) and \(x(t)\) is absolutely continuous; hence \(s(t)\) is absolutely continuous on \(\mathcal{J}_q\), and for almost all \(t\in\mathcal{J}_q\), \(\dot{s}(t)=\frac{\partial h_{\mathrm{H}}}{\partial x}(x(t),t)\dot{x}(t)+\frac{\partial h_{\mathrm{H}}}{\partial t}(x(t),t)\). Since there are finitely many subintervals and \(s(t)\) is continuous at their endpoints, these absolutely continuous pieces concatenate to give \(s\in AC([a,b])\). Since \([a,b]\) was arbitrary, the result follows.
\end{proof}
\begin{theorem}
\label{thm:forward_invariance_tvcbf}
Let $h_{\mathrm{H}}(x,t)$ be the transition-smoothed SVM barrier defined by \eqref{eq:hH_homotopy_transition}--\eqref{eq:hH_fixed_interval}, and suppose the conditions of Lemma~\ref{lem:hH_regularity} hold. Let $\kappa:\mathbb{R}\to\mathbb{R}$ be a locally Lipschitz extended class-$\mathcal{K}$ function, and consider an absolutely continuous closed-loop trajectory $x(t)$ of \eqref{dyna} on $[t_0,T)$. If the applied input $u(t)\in\mathcal{U}(t)$ satisfies the time-varying CBF condition \eqref{eq:hH_tvcbf_condition} along the trajectory for almost all $t\in[t_0,T)$, and if $h_{\mathrm{H}}(x(t_0),t_0)\geq 0$, then $h_{\mathrm{H}}(x(t),t)\geq 0$ for all $t\in[t_0,T)$. Equivalently, the time-varying learned safe set $\widehat{\mathcal{S}}_{\mathrm{H}}(t)=\{x\in\mathbb{R}^n:h_{\mathrm{H}}(x,t)\geq 0\}$ is forward invariant along the closed-loop trajectory.
\end{theorem}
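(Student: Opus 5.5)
Set $s(t)=h_{\mathrm{H}}(x(t),t)$ on $[t_0,T)$. By Lemma~\ref{lem:hH_regularity}, $s$ is absolutely continuous on every compact subinterval $[t_0,T']\subset[t_0,T)$. It is also differentiable for almost all $t$, with
\[
\dot s(t)=L_fh_{\mathrm{H}}(x(t),t)+L_gh_{\mathrm{H}}(x(t),t)u(t)+\frac{\partial h_{\mathrm{H}}}{\partial t}(x(t),t),
\]
using \eqref{dyna} and \eqref{eq:hH_dot_lie}. The finitely many transition endpoints in $[t_0,T']$ form a null set and can be ignored. The hypothesis that $u(t)$ satisfies \eqref{eq:hH_tvcbf_condition} along the trajectory then gives the scalar differential inequality $\dot s(t)\geq-\kappa(s(t))$ for almost all $t$. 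On the fixed-active-set pieces this is \eqref{eq:hH_tvcbf_condition_fixed_interval}; in the transition windows the derivative is \eqref{eq:hH_time_derivative_transition}. This reduces the theorem to a one-dimensional comparison argument, after which no SVM structure is needed.

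Next, apply the comparison lemma for absolutely continuous functions. Let $z(t)$ solve $\dot z=-\kappa(z)$ with $z(t_0)=s(t_0)\geq 0$. Because $\kappa$ is locally Lipschitz, this solution is unique. Because $\kappa(0)=0$, $z\equiv0$ is an equilibrium, so uniqueness forces $z(t)\geq0$ for all $t$. The comparison lemma for a.e.\ differential inequalities with locally Lipschitz right-hand side then gives $s(t)\geq z(t)\geq 0$ on $[t_0,T')$. Since $T'<T$ was arbitrary, the conclusion holds on all of $[t_0,T)$.

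To avoid relying on a black-box comparison lemma, argue by contradiction. Suppose $s(t_1)<0$ for some $t_1$, and let $t^\ast=\sup\{t\le t_1: s(t)\ge0\}$. By continuity $s(t^\ast)=0$, and $s<0$ on $(t^\ast,t_1]$. Pick a compact set $\mathcal{X}_c$ containing $x([t^\ast,t_1])$; the range of $s$ is then bounded. On that bounded range $\kappa$ is Lipschitz with some constant $L$, and since $\kappa(0)=0$ we get $-\kappa(s)\ge Ls$ for $s<0$. Hence $\dot s\ge Ls$ a.e.\ on $[t^\ast,t_1]$. Absolute continuity makes $\frac{d}{dt}(e^{-L(t-t^\ast)}s(t))\geq0$ a.e., and integrating gives $e^{-L(t_1-t^\ast)}s(t_1)\geq s(t^\ast)=0$. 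This contradicts $s(t_1)<0$. Integrating the a.e.\ derivative is legitimate only because $s$ is absolutely continuous rather than merely a.e.\ differentiable.

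The main obstacle is that step: ensuring the a.e.\ derivative inequality actually controls $s$ across active-set switching and homotopy window endpoints. This is exactly where Lemma~\ref{lem:hH_regularity} is used. Continuity at the endpoints, from the matching conditions $h_\ell^-(x,\tau_{\ell+1})=h_\ell(x,\tau_{\ell+1}^-)$ and its post-switch counterpart, rules out jumps in $s$. Local finiteness of the switching times guarantees only finitely many pieces per compact interval. Together these let the pieces concatenate into an absolutely continuous function, so no jump of $h_{\mathrm{H}}$ can push $s$ negative at a switching instant.
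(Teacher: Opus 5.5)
Your proof is correct and follows the paper's route: you set $s(t)=h_{\mathrm{H}}(x(t),t)$, use Lemma~\ref{lem:hH_regularity} for absolute continuity and the a.e.\ chain rule, obtain $\dot s\geq-\kappa(s)$ a.e., and conclude by comparison with $\dot z=-\kappa(z)$, exactly as the paper does. Your self-contained contradiction argument is also valid, and it can be shortened: on $(t^\ast,t_1]$ you have $s<0$, so $-\kappa(s)>0$ and hence $\dot s>0$ a.e., which gives $s(t_1)\geq s(t^\ast)=0$ directly without the Lipschitz constant $L$.
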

\begin{proof}
Define \(s(t)=h_{\mathrm{H}}(x(t),t)\). By Lemma~\ref{lem:hH_regularity}, \(s(t)\) is absolutely continuous on every compact subinterval of \([t_0,T)\) and differentiable almost everywhere. Moreover, for almost all \(t\in[t_0,T)\), the chain rule gives
\begin{equation}
    \dot{s}(t)
    =
    L_fh_{\mathrm{H}}(x(t),t)
    +
    L_gh_{\mathrm{H}}(x(t),t)u(t)
    +
    \partial_t h_{\mathrm{H}}(x(t),t).
\end{equation}
Using \eqref{eq:hH_tvcbf_condition}, we obtain \(\dot{s}(t)\geq -\kappa(s(t))\) for almost all \(t\in[t_0,T)\). Consider the scalar comparison system \(\dot{z}(t)=-\kappa(z(t))\), with \(z(t_0)=s(t_0)\). Since \(\kappa\) is locally Lipschitz, this scalar system admits a unique solution. Also, since \(\kappa(0)=0\), the origin is an equilibrium, and \(z(t_0)=s(t_0)\geq 0\) implies \(z(t)\geq 0\) for all \(t\in[t_0,T)\). By the scalar comparison principle for absolutely continuous functions, \(s(t)\geq z(t)\) for all \(t\in[t_0,T)\). Hence, \(s(t)\geq 0\) for all \(t\in[t_0,T)\). Since \(s(t)=h_{\mathrm{H}}(x(t),t)\), it follows that \(h_{\mathrm{H}}(x(t),t)\geq 0\) for all \(t\in[t_0,T)\). Thus, \(x(t)\in\widehat{\mathcal{S}}_{\mathrm{H}}(t)\) for all \(t\in[t_0,T)\), which proves forward invariance.
\end{proof}
Theorem~\ref{thm:forward_invariance_tvcbf} shows that enforcing the time-varying CBF condition for \(h_{\mathrm{H}}(x,t)\) guarantees safety. On fixed-active-set intervals, this reduces to \eqref{eq:hH_tvcbf_condition_fixed_interval}, while across active-set transitions, the homotopy-smoothed barrier keeps the condition well defined for the forward-invariance argument.
\begin{figure}[htb]
    \begin{center}
        \includegraphics[width=1\linewidth]{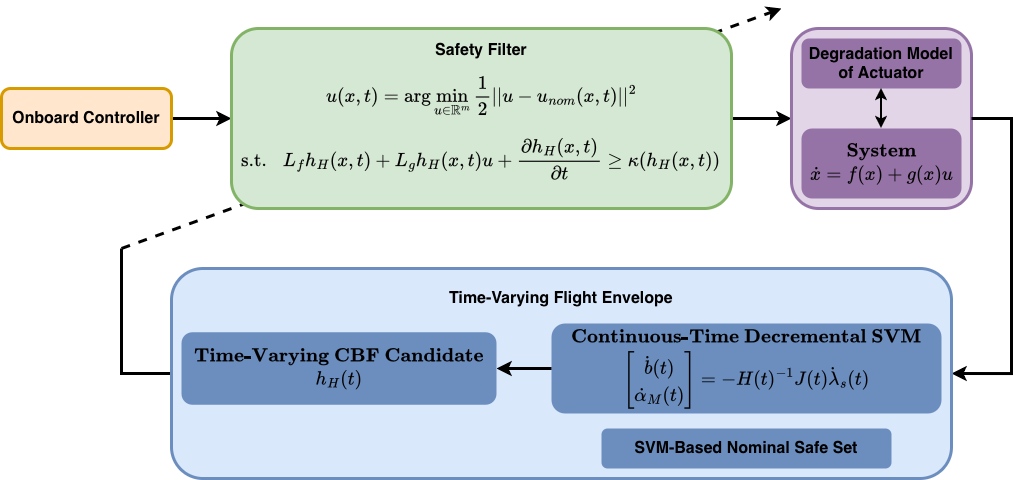}
    \end{center}
    \caption{Degradation-aware SVM-CBF-QP framework for adaptive safe-set learning and safe control under reduced control authority.}
    \label{fig:intro2}
\end{figure}
\subsection{Degradation-Aware SVM-CBF-QP Safety Filter}
\label{subsec:qp_safety_filter}

Fig.~\ref{fig:intro2} summarizes the proposed degradation-aware SVM-CBF-QP framework for adaptive safe-set learning and safe control under reduced control authority. To enforce the homotopy-smoothed time-varying SVM-CBF condition while preserving nominal control performance, we construct a safety filter that minimally modifies a nominal input. Let \(u_{\mathrm{nom}}(x,t)\) denote a nominal control input designed. The safety-filtered input is computed by solving
\begin{equation}
\begin{aligned}
    u^\star(x,t)
    =
    \arg\min_{u\in\mathbb{R}^m}\quad
    & \frac{1}{2}\|u-u_{\mathrm{nom}}(x,t)\|^2 \\
    \mathrm{s.t.}\quad
    &
    L_fh_{\mathrm H}(x,t)
    +
    L_gh_{\mathrm H}(x,t)u
    \\
    &+
    \partial_t h_{\mathrm H}(x,t)
    \geq
    -\kappa\big(h_{\mathrm H}(x,t)\big), \\
    &
    A_u u\leq \lambda(t)b_u .
\end{aligned}
\label{eq:qp_safety_filter}
\end{equation}
The first constraint in \eqref{eq:qp_safety_filter} enforces the time-varying CBF condition for the homotopy-smoothed SVM barrier, while the second enforces the degraded input constraint \(\mathcal{U}(t)=\{u\in\mathbb{R}^m:A_u u\leq \lambda(t)b_u\}\). The strictly convex objective makes the filter minimally invasive, and the affine constraints make \eqref{eq:qp_safety_filter} a convex QP with a unique optimizer whenever its feasible set is nonempty. On fixed-active-set intervals, \(h_{\mathrm H}=h_\ell\) and \(\partial_t h_{\mathrm H}=\Phi_\ell v_s\), so the CBF constraint becomes \(L_fh_\ell(x,t)+L_gh_\ell(x,t)u+\Phi_\ell(x,t)v_s(t)\geq-\kappa(h_\ell(x,t))\). During transition windows, \(\partial_t h_{\mathrm H}\) is computed from \eqref{eq:hH_time_derivative_transition}, allowing the QP to account for both degradation-driven and transition-induced boundary motion. For notational compactness, define \(a_{\mathrm H}(x,t)=L_gh_{\mathrm H}(x,t)\in\mathbb{R}^{1\times m}\) and \(r_{\mathrm H}(x,t)=-\kappa(h_{\mathrm H}(x,t))-L_fh_{\mathrm H}(x,t)-\partial_t h_{\mathrm H}(x,t)\). Then the barrier constraint in \eqref{eq:qp_safety_filter} is equivalent to \(a_{\mathrm H}(x,t)u\geq r_{\mathrm H}(x,t)\). The following lemma characterizes pointwise feasibility of the CBF constraint under the degraded polytopic input constraint.

\begin{lemma}
\label{lem:pointwise_feasibility_H}
Fix \((x,t)\), and let \(a_{\mathrm H}=a_{\mathrm H}(x,t)\) and \(r_{\mathrm H}=r_{\mathrm H}(x,t)\). Let \(\mathcal{U}(t)=\{u\in\mathbb{R}^m:A_u u\leq \lambda(t)b_u\}\). Then there exists \(u\in\mathcal U(t)\) satisfying the SVM-CBF constraint \(L_fh_{\mathrm H}+L_gh_{\mathrm H}u+\partial_t h_{\mathrm H}\geq-\kappa(h_{\mathrm H})\) if and only if \(r_{\mathrm H}\leq\sigma_{\mathcal U(t)}(a_{\mathrm H})\), where \(\sigma_{\mathcal U(t)}(a_{\mathrm H})=\max_{u\in\mathbb{R}^m}\{a_{\mathrm H}u:A_u u\leq \lambda(t)b_u\}\) is the support function of the degraded input polytope in the direction \(a_{\mathrm H}\).
\end{lemma}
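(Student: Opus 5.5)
The plan is to reduce the statement to elementary properties of a linear program over the compact polytope $\mathcal U(t)$. First, the barrier constraint is rewritten as $a_{\mathrm H}u\geq r_{\mathrm H}$, using the definitions of $a_{\mathrm H}$ and $r_{\mathrm H}$. This is a direct rearrangement of $L_fh_{\mathrm H}+L_gh_{\mathrm H}u+\partial_t h_{\mathrm H}\geq-\kappa(h_{\mathrm H})$, with all quantities evaluated at the fixed $(x,t)$.

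Second, I will show that the support function $\sigma_{\mathcal U(t)}(a_{\mathrm H})$ is well defined and attained. The nominal set $\mathcal U_0$ is a compact polytope. Since $\lambda(t)\in[\lambda_{\min},1]$, the degraded set is its scaling, $\mathcal U(t)=\lambda(t)\mathcal U_0$, so it is also compact. This scaling identity follows from $A_u u\leq\lambda b_u \iff A_u(u/\lambda)\leq b_u$ for $\lambda>0$.

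Nonemptiness of $\mathcal U(t)$ is the one point that needs care, because the paper only states that $\mathcal U_0$ is compact. I will assume $\mathcal U_0\neq\emptyset$, which is implicit in calling it an admissible input set. Then any $u_0\in\mathcal U_0$ gives $\lambda(t)u_0\in\mathcal U(t)$. The linear functional $u\mapsto a_{\mathrm H}u$ is continuous, so by Weierstrass it attains its maximum on the nonempty compact set $\mathcal U(t)$ at some $u^\dagger$. Hence $\sigma_{\mathcal U(t)}(a_{\mathrm H})=a_{\mathrm H}u^\dagger$ is finite.

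The main obstacle is this existence of a maximizer: without compactness and nonemptiness, the sup could be $+\infty$, or $-\infty$ for an empty set. The scaling observation above is what resolves it. The degenerate case $a_{\mathrm H}=0$ needs no separate treatment: then $\sigma=0$, and both sides of the equivalence reduce to $r_{\mathrm H}\leq 0$.

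Third, the two directions of the equivalence follow directly.
\begin{itemize}
\item[($\Rightarrow$)] If some $u\in\mathcal U(t)$ satisfies $a_{\mathrm H}u\geq r_{\mathrm H}$, then $r_{\mathrm H}\leq a_{\mathrm H}u\leq\sigma_{\mathcal U(t)}(a_{\mathrm H})$ by the definition of the maximum.
\item[($\Leftarrow$)] If $r_{\mathrm H}\leq\sigma_{\mathcal U(t)}(a_{\mathrm H})$, the maximizer $u^\dagger\in\mathcal U(t)$ satisfies $a_{\mathrm H}u^\dagger=\sigma_{\mathcal U(t)}(a_{\mathrm H})\geq r_{\mathrm H}$, so it satisfies the SVM-CBF constraint.
\end{itemize}

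A useful corollary for the later recursive-feasibility result is $\sigma_{\mathcal U(t)}(a_{\mathrm H})=\lambda(t)\,\sigma_{\mathcal U_0}(a_{\mathrm H})$, which also follows from the scaling identity.
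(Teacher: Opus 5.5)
Your proof is correct and follows the same route as the paper. Both rewrite the constraint as $a_{\mathrm H}u\geq r_{\mathrm H}$ and compare $r_{\mathrm H}$ with the maximum of $a_{\mathrm H}u$ over $\mathcal U(t)$; your added justification that this maximum is attained ($\mathcal U(t)=\lambda(t)\,\mathcal U_0$ is compact, and nonempty under your explicit assumption $\mathcal U_0\neq\emptyset$) fills in a point the paper's proof takes for granted.
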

\begin{proof}
At the fixed point \((x,t)\), the SVM-CBF constraint is equivalent to \(a_{\mathrm H}u\geq r_{\mathrm H}\). Therefore, a feasible input exists if and only if the maximum achievable value of \(a_{\mathrm H}u\) over the degraded input set \(\mathcal U(t)\) is at least \(r_{\mathrm H}\). Since \(\max_{u\in\mathcal U(t)} a_{\mathrm H}u=\sigma_{\mathcal U(t)}(a_{\mathrm H})\), the feasibility condition is \(r_{\mathrm H}\leq\sigma_{\mathcal U(t)}(a_{\mathrm H})\). If \(a_{\mathrm H}=0\), then \(\sigma_{\mathcal U(t)}(a_{\mathrm H})=0\), and the condition reduces to \(r_{\mathrm H}\leq 0\), which means that the CBF inequality must be satisfied independently of the input. This completes the proof.
\end{proof}
Lemma~\ref{lem:pointwise_feasibility_H} shows that feasibility depends on \(r_{\mathrm H}\), \(\mathcal U(t)\), and \(a_{\mathrm H}=L_gh_{\mathrm H}\). Since decreasing \(\lambda(t)\) contracts \(\mathcal U(t)\) and reduces corrective authority, the learned safe set and homotopy-smoothed barrier must remain compatible with the remaining control authority.

\begin{theorem}
\label{thm:recursive_feasibility}
Let \(h_{\mathrm H}\) satisfy the regularity conditions in Lemma~\ref{lem:hH_regularity}. Suppose that, for all \(x\in\widehat{\mathcal S}_{\mathrm H}(t)\), the pointwise feasibility condition
\[
    -\kappa\big(h_{\mathrm H}(x,t)\big)
    -
    L_fh_{\mathrm H}(x,t)
    -
    \partial_t h_{\mathrm H}(x,t)
    \leq
    \sigma_{\mathcal U(t)}
    \big(
    L_gh_{\mathrm H}(x,t)
    \big)
\]
holds for almost all \(t\in[t_0,T)\). If \(x(t_0)\in\widehat{\mathcal S}_{\mathrm H}(t_0)\), then the safety filter \eqref{eq:qp_safety_filter} is feasible for almost all \(t\in[t_0,T)\) along the closed-loop trajectory, and \(x(t)\in\widehat{\mathcal S}_{\mathrm H}(t)\) for all \(t\in[t_0,T)\).
\end{theorem}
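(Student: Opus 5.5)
The proof will be an induction-free bootstrap that combines Lemma~\ref{lem:pointwise_feasibility_H} with Theorem~\ref{thm:forward_invariance_tvcbf}. The difficulty is circularity. Feasibility at time \(t\) is only assumed for states in \(\widehat{\mathcal S}_{\mathrm H}(t)\), while membership in \(\widehat{\mathcal S}_{\mathrm H}(t)\) is what forward invariance is meant to deliver. I will break the circle with a first-exit-time argument.

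\textbf{Step 1: the QP is well posed on the safe set.} Fix \(t\) outside the null set where the hypothesis fails, and fix \(x\in\widehat{\mathcal S}_{\mathrm H}(t)\). The hypothesis is exactly \(r_{\mathrm H}(x,t)\leq\sigma_{\mathcal U(t)}(a_{\mathrm H}(x,t))\). By Lemma~\ref{lem:pointwise_feasibility_H}, some \(u\in\mathcal U(t)\) satisfies the CBF inequality, so the feasible set of \eqref{eq:qp_safety_filter} is nonempty. It is also closed and convex, being the intersection of a half-space with the compact polytope \(\mathcal U(t)\); the latter is compact because it is \(\lambda(t)\mathcal U_0\) with \(\lambda(t)>0\). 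The objective is strictly convex, so a unique minimizer \(u^\star(x,t)\in\mathcal U(t)\) exists and satisfies \eqref{eq:hH_tvcbf_condition}.

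\textbf{Step 2: feasibility and invariance up to the first exit.} Let \(x(\cdot)\) be the absolutely continuous closed-loop solution, and define
\[
t^\ast=\sup\{\tau\in[t_0,T): h_{\mathrm H}(x(s),s)\geq 0\ \forall s\in[t_0,\tau]\}.
\]
By Step 1, \(u(t)=u^\star(x(t),t)\in\mathcal U(t)\) satisfies the time-varying CBF condition for almost all \(t\in[t_0,t^\ast)\). Theorem~\ref{thm:forward_invariance_tvcbf} applied on \([t_0,t^\ast)\) then gives \(h_{\mathrm H}\geq 0\) there. Continuity of \(s(t)=h_{\mathrm H}(x(t),t)\), from Lemma~\ref{lem:hH_regularity}, extends this to \(s(t^\ast)\geq 0\).

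\textbf{Step 3: showing \(t^\ast=T\).} Suppose instead that \(t^\ast<T\). Then there are times arbitrarily close to \(t^\ast\), on the right, where \(s<0\). I would close this gap with the comparison argument from Theorem~\ref{thm:forward_invariance_tvcbf}. I expect this to be the main obstacle. Off the safe set the hypothesis gives no control, so I cannot simply invoke the theorem past \(t^\ast\).

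My first attempt is to interpret the closed loop as using the QP input whenever it is feasible. Outside \(\widehat{\mathcal S}_{\mathrm H}\) the QP may be infeasible, which leaves the closed loop undefined; I therefore read the statement as applying to trajectories along which the filter is applied. The cleaner route is a contradiction argument. Take \(\epsilon>0\) and consider \(\inf\{t>t^\ast: s(t)<-\epsilon\}\). On the interval before that time, \(s\geq-\epsilon\), and local Lipschitzness of \(\kappa\) together with \(\kappa(0)=0\) gives \(|\kappa(s)|\leq L\epsilon\). On the portion of that interval where the CBF constraint is enforced, \(\dot s\geq -L\epsilon\).

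If that does not close cleanly, the fallback is to note that the statement implicitly assumes the trajectory is generated by a well-defined filter. In that case, the concatenation of Steps 1--2 with \(t^\ast=T\) follows directly from the maximality of \(t^\ast\) and the continuity of \(s\). Either way, feasibility holds for almost all \(t\) because \(x(t)\in\widehat{\mathcal S}_{\mathrm H}(t)\) everywhere, and Step 1 then applies at almost every \(t\).
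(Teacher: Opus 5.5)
Your Step~1 is correct and your diagnosis of the circularity is accurate. However, Step~3 is never closed, and neither route you offer closes it. The \(\epsilon\)-argument needs \(\dot s\geq-\kappa(s)\) on the part of \((t^\ast,t_\epsilon)\) where \(s<0\). The hypothesis only guarantees feasibility of \eqref{eq:qp_safety_filter} for \(x\in\widehat{\mathcal S}_{\mathrm H}(t)\), i.e.\ where \(s\geq 0\). Suppose, for instance, that \(s<0\) on a whole interval \((t^\ast,t^\ast+\delta)\). Then the only instant at which the CBF constraint is guaranteed is \(t^\ast\) itself, a null set that an almost-everywhere differential inequality cannot see, so no contradiction can be extracted. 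Your fallback claim that \(t^\ast=T\) follows ``from the maximality of \(t^\ast\) and the continuity of \(s\)'' is a non sequitur. Those two facts give \(s(t^\ast)\geq 0\) and nothing more, which is fully compatible with the trajectory leaving immediately afterwards. If instead you presuppose that the filter is well defined and applied along the entire trajectory, then \eqref{eq:hH_tvcbf_condition} holds for almost all \(t\in[t_0,T)\). In that case Theorem~\ref{thm:forward_invariance_tvcbf} applies directly on \([t_0,T)\) and \(t^\ast\) is superfluous, but the feasibility half of the conclusion has been assumed rather than proved.

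The paper's proof makes exactly that presupposition without comment. It gets feasibility at any \(t\) with \(x(t)\in\widehat{\mathcal S}_{\mathrm H}(t)\) from Lemma~\ref{lem:pointwise_feasibility_H} and asserts that the optimizer therefore enforces the CBF condition along the closed loop for almost all \(t\). It then invokes Theorem~\ref{thm:forward_invariance_tvcbf} and reads feasibility back off invariance. So your fallback reproduces the paper's argument, and the first-exit machinery adds nothing unless the hypothesis is strengthened.

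The missing ingredient is a feasibility margin outside the safe set, analogous to the standard practice of imposing the CBF condition on an open set containing the safe set. Assume the pointwise condition for all \(x\) with \(h_{\mathrm H}(x,t)\geq-\epsilon_0\), for some \(\epsilon_0>0\). Set \(t^{\ast\ast}=\inf\{t\in[t_0,T):s(t)<-\epsilon_0\}\), with \(t^{\ast\ast}=T\) if that set is empty. On \([t_0,t^{\ast\ast})\) the QP is feasible for almost all \(t\), so Theorem~\ref{thm:forward_invariance_tvcbf} gives \(s\geq 0\) there. If \(t^{\ast\ast}<T\), continuity yields both \(s(t^{\ast\ast})\geq 0\) and \(s(t^{\ast\ast})\leq-\epsilon_0\), a contradiction. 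Hence \(t^{\ast\ast}=T\), invariance holds, and your Step~1 then gives feasibility for almost all \(t\). Existence of an absolutely continuous closed-loop solution, for example via Lipschitz continuity of \(u^\star\), is taken for granted by both you and the paper.
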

\begin{proof}
Fix any \(t\in[t_0,T)\) for which the assumed pointwise feasibility condition holds, and suppose \(x(t)\in\widehat{\mathcal S}_{\mathrm H}(t)\). By Lemma~\ref{lem:pointwise_feasibility_H}, there exists at least one input \(u\in\mathcal U(t)\) satisfying the SVM-CBF constraint. Hence, the constraint set of the safety filter \eqref{eq:qp_safety_filter} is nonempty, and the QP is feasible at time \(t\). Since the optimizer \(u^\star(x,t)\) satisfies the SVM-CBF constraint by construction, the closed-loop trajectory satisfies the time-varying CBF condition in \eqref{eq:hH_tvcbf_condition} for almost all \(t\in[t_0,T)\). Therefore, by Theorem~\ref{thm:forward_invariance_tvcbf}, the initial condition \(x(t_0)\in\widehat{\mathcal S}_{\mathrm H}(t_0)\) implies
\[
    x(t)\in\widehat{\mathcal S}_{\mathrm H}(t),
    \qquad
    \forall t\in[t_0,T).
\]
Since the trajectory remains in \(\widehat{\mathcal S}_{\mathrm H}(t)\), and the assumed pointwise feasibility condition holds for all states in \(\widehat{\mathcal S}_{\mathrm H}(t)\), the QP remains feasible along the trajectory for almost all \(t\in[t_0,T)\). This proves recursive feasibility.
\end{proof}
\begin{algorithm}[t]
\caption{Degradation-Aware Time-Varying SVM-CBF Safety Filter}
\label{alg:tvcbf}
\begin{algorithmic}[1]
\REQUIRE Training data $\{(x_i,y_i)\}_{i=1}^{N}$; nominal controller $u_{\mathrm{nom}}$; degradation profile $\lambda(t)$; rate $k_c$; window $T^{\mathrm{tr}}$
\ENSURE Safe control input $u^{\star}(t)$
\STATE \textbf{Offline initialization:}
\STATE Train RBF-SVM by solving \eqref{eq:svm_dual_problem} to obtain $\{\alpha_i\},b$ and $h_0$ in \eqref{eq:nominal_decision_function}
\STATE Partition samples into $\mathcal{M},\mathcal{E},\mathcal{R}$ via \eqref{eq:nominal_margin_set}--\eqref{eq:nominal_reserve_set}
\STATE Set $c\gets\varnothing$, $\mathrm{blend}\gets\mathrm{false}$
\STATE \textbf{Online safety filtering:}
\FOR{each sampling instant $t$}
    \STATE Update $\lambda(t)$, $\lambda_s(t)$, $v_s(t)$ via \eqref{eq:lambdas_dynamics}, \eqref{eq:vs_buffered_rate}
    \IF{$v_s(t)>0$}
        \IF{$c=\varnothing$ \textbf{or} $\alpha_c\le 0$}
            \STATE Select next safe support vector $c$ by weighted $(\alpha,q)$ score; exclude $c$ from $\mathcal{M}$
        \ENDIF
        \STATE Build $H(t),J(t)$ from \eqref{eq:H_explicit}, \eqref{eq:H_J_defs} and solve $\dot\theta=H^{-1}J\,v_s$ \eqref{eq:boundary_parameter_dynamics}
    \ENDIF
    \STATE Evaluate $h_\ell$ \eqref{eq:local_tv_svm_decision_function}, $\nabla_x h_\ell$ \eqref{eq:grad_h_ell}, $\partial_t h_\ell=\Phi_\ell v_s$ \eqref{eq:h_ell_t_vs_form}
    \IF{$\mathrm{blend}$ active}
        \STATE Form $h_{\mathrm H}$ \eqref{eq:hH_homotopy_transition} and $\partial_t h_{\mathrm H}$ \eqref{eq:hH_time_derivative_transition} by blending pre-/post-switch boundaries
    \ELSE
        \STATE $h_{\mathrm H}\gets h_\ell$, $\partial_t h_{\mathrm H}\gets\partial_t h_\ell$ \eqref{eq:hH_fixed_interval}
    \ENDIF
    \STATE Compute $L_f h_{\mathrm H}, L_g h_{\mathrm H}$ and form the CBF constraint \eqref{eq:hH_tvcbf_condition}
    \STATE Solve QP \eqref{eq:qp_safety_filter} for $u^{\star}$; apply to \eqref{dyna}
    \STATE Reduce $\alpha_c\!\gets\!\max(\alpha_c-k_c v_s\,dt,0)$; update $\theta\!\gets\!\theta+\dot\theta\,dt$ \eqref{eq:boundary_parameter_dynamics}
    \STATE Reclassify $\mathcal{M},\mathcal{E},\mathcal{R}$ via \eqref{eq:nominal_margin_set}--\eqref{eq:nominal_reserve_set}
    \IF{$\mathcal{M}$ changed \textbf{and not} $\mathrm{blend}$}
        \STATE $\mathrm{blend}\gets\mathrm{true}$; initialize transition window \eqref{eq:hH_homotopy_transition}
    \ENDIF
\ENDFOR
\end{algorithmic}
\end{algorithm}
\section{Simulation Results}
The proposed degradation-aware SVM-CBF-QP safety-filter framework is evaluated 
using the linearized short-period longitudinal dynamics of a VTOL aircraft 
operating under cruise conditions. The corresponding state-space model is 
given by
\begin{equation}
    \dot{x} = Ax + B u,
    \qquad
    y = Cx,
\end{equation}
where the state vector is $x = [\alpha \quad q]^T$, with $\alpha$ denoting the angle of attack and $q$ denoting the pitch rate. Also, $u$ represents the elevator deflection. The system matrices are
\begin{equation}
    A =
    \begin{bmatrix}
        -0.394 & 0.993 \\
        -1.619 & -0.395
    \end{bmatrix},
    \;\;
    B =
    \begin{bmatrix}
        -0.021 \\ -1.214
    \end{bmatrix},
    \;\;
    C =
    \begin{bmatrix}
        1 & 0
    \end{bmatrix}.
\end{equation}
Under nominal operating conditions, the control input is constrained by \(|u|\leq 0.3~\text{rad}\), and the state constraints are \(|\alpha|\leq 0.25~\text{rad}\) and \(|q|\leq 0.25~\text{rad/s}\). A baseline Linear Quadratic Integral (LQI) \cite{Lavretsky} controller is employed for angle-of-attack tracking. Integral action is introduced by augmenting the system with the integral state $\xi$, defined by
\begin{equation}
    \dot{\xi} = r - y,
\end{equation}
where $r$ is the reference angle of attack. The augmented system is
\begin{equation}
    A_a =
    \begin{bmatrix}
        A & \mathbf{0} \\ 
        -C & 0
    \end{bmatrix},
    \qquad
    B_a =
    \begin{bmatrix}
        B \\ 0
    \end{bmatrix}.
\end{equation}
The LQI controller is designed using the weighting matrices $Q=\mathrm{diag}(10,1,50)$ and $R=1$. The optimal feedback gain is obtained by solving the continuous-time algebraic Riccati equation for the augmented system, resulting in the control law
\begin{equation}
    u = -K_x x - K_i \xi,
\end{equation}
where $K = [\,K_x \;\; K_i\,]$ is the optimal LQI gain matrix. The reference angle of attack is defined as
\begin{equation}
    \alpha_{\mathrm{ref}}(t)=
    \begin{cases}
        0, & t < 3~\text{s},\\[2mm]
        0.25\sin\!\left(\dfrac{2\pi(t-3)}{6}\right), & t \geq 3~\text{s},
    \end{cases}
\end{equation}
which introduces a sinusoidal tracking command after 3 seconds.
\subsection{Initial Safe Set Learning and Simulation Setup}
The nominal safe envelope is learned from $N=225$ randomnly generated labeled samples on a uniform grid, with angle of attack $\alpha$ (rad) on the $x$-axis and pitch rate $q$ (rad/s) on the $y$-axis, both ranging from $-0.4$ to $0.4$. Samples are labeled safe if $|\alpha|\leq 0.25$~rad and $|q|\leq 0.25$~rad/s, and unsafe otherwise. A soft-margin RBF-SVM with $\gamma=30$ and $C=1$ is trained on this dataset, and its decision function $h_0(x)$ in \eqref{eq:nominal_decision_function} defines the initial barrier candidate. The admissible input set is the box $\mathcal{U}(t)=\{u: A_u u\le\lambda(t)b_u\}$, with $A_u=[1;-1]$ and $b_u=[u_{\max};u_{\max}]$. The scheduling signal is generated with $\varepsilon_s=0.05$ and $\lambda_{\min}=0.55$, ensuring that $\lambda_s(t)\le\lambda(t)$. The decremental index $c$ is selected sequentially throughout the simulation. At each cycle, $c$ is manually selected from the safe-labeled support vectors according to the weighted score $\alpha_c^2+60q_c^2$, with the support vector having the largest score chosen for removal. This weighting encourages the removal of the outer support vectors with greater influence in the $q$-direction, reflecting the stronger effect of elevator deflection degradation on $q$. Its coefficient is then reduced at a constant rate $k_c=130$, tuned specifically for this case to obtain the desired shrinking behavior, until $\alpha_c=0$ after which the next support vector is selected. Although the active set changes after each selection, the barrier function, $h(x,t)$ remains continuous. However, its time derivative $\partial h/\partial t$ may be discontinuous during switching due to the change in the active coefficient, consistent with the piecewise-$C^1$ regularity established in Lemma~\ref{lem:hH_regularity}. The homotopy transition \eqref{eq:hH_homotopy_transition} is therefore applied only when migration of the margin set changes the boundary representation. In this case, the homotopy function $\eta(s)=s^2(3-2s)$.
\subsection{Case 1: Degradation-Aware Safety Filtering}
For this case, the degradation signal decreases monotonically from $\lambda_s=1$ to $\lambda_{\min}$ over $t \in [5,25]$~s. Three configurations are compared under identical degraded input limits: (i) no safety filter, (ii) a time-invariant filter enforcing the nominal barrier $h_0$, and (iii) the proposed time-varying filter enforcing $h_H(x,t)$ with the continuous-time decremental update in \ref{subsec:ct_decremental_update}.
\begin{figure}[htb]
    \begin{center}
        \includegraphics[width=0.7\linewidth]{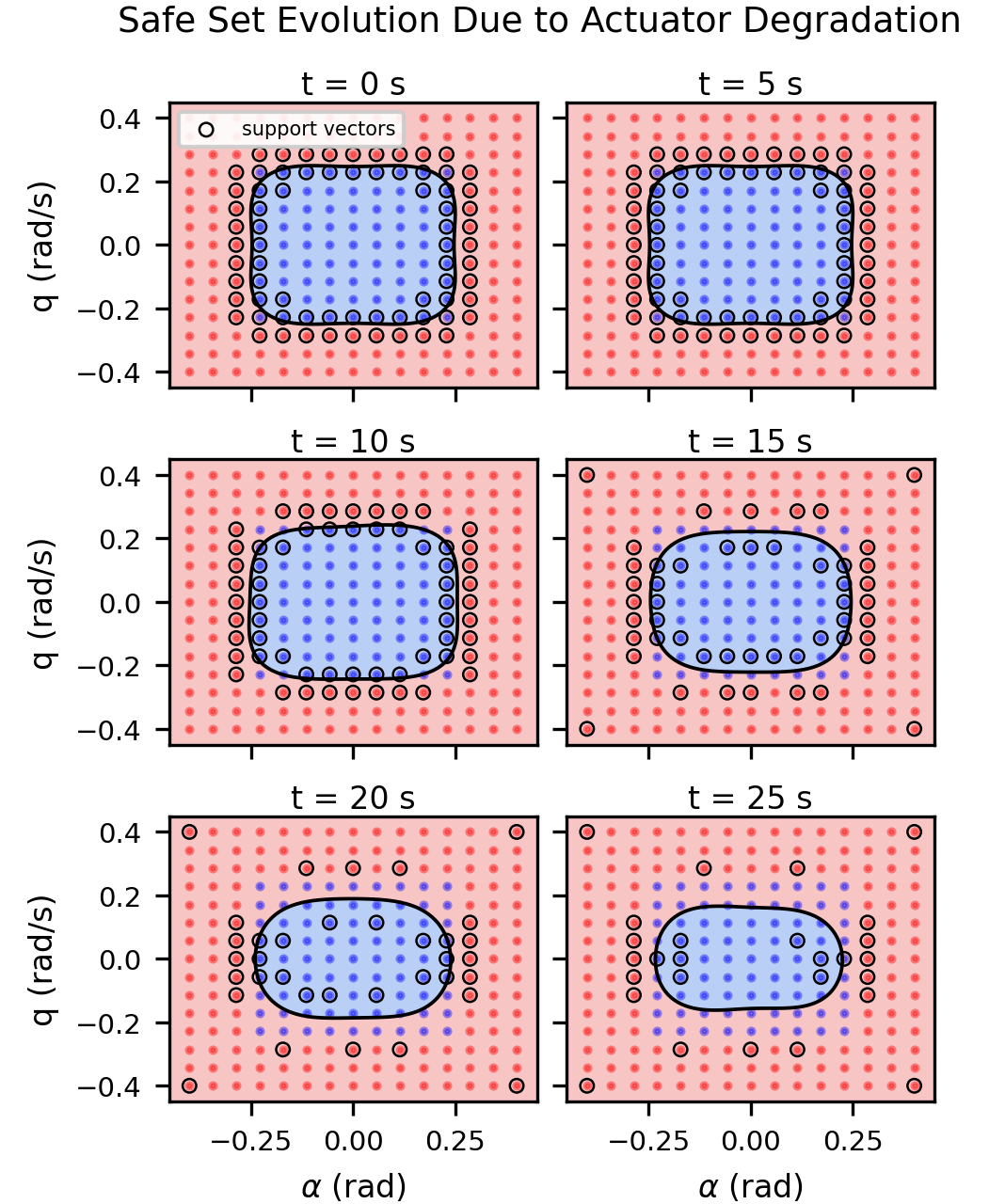}
    \end{center}
    \caption{Evolution of the learned safe envelope under actuator degradation. Circles denote support vectors, and the decremental update removes outer safe support vectors, contracting the envelope.}
    \label{fig:res1}
\end{figure}
Fig.~\ref{fig:res1} shows the evolution of the learned safe set at $t\in\{0,5,10,15,20,25\}$~s. The decremental update progressively contracts the envelope as support-vector influence is removed. Fig.~\ref{fig:res1b} shows the closed-loop trajectory with the time-varying filter enabled, overlaid on the learned safe set at $t\in\{0,5,10,15,20,25\}$~s. During nominal operation, the sinusoidal reference drives the state near the boundary of the initial safe set. As the safe set contracts due to degradation, the trajectory becomes more restricted. Although the earlier trajectory (dashed) lies outside the current safe set, the current state (dot) remains inside at every snapshot. From $t=15$ to $25$~s, the trajectory follows the contracting boundary, indicating that the filter limits the tracking response to satisfy the shrinking safety constraints. This behavior is consistent with the forward-invariance guarantee in Theorem~\ref{thm:forward_invariance_tvcbf}.
\begin{figure}[htb]
    \begin{center}
        \includegraphics[width=0.7\linewidth]{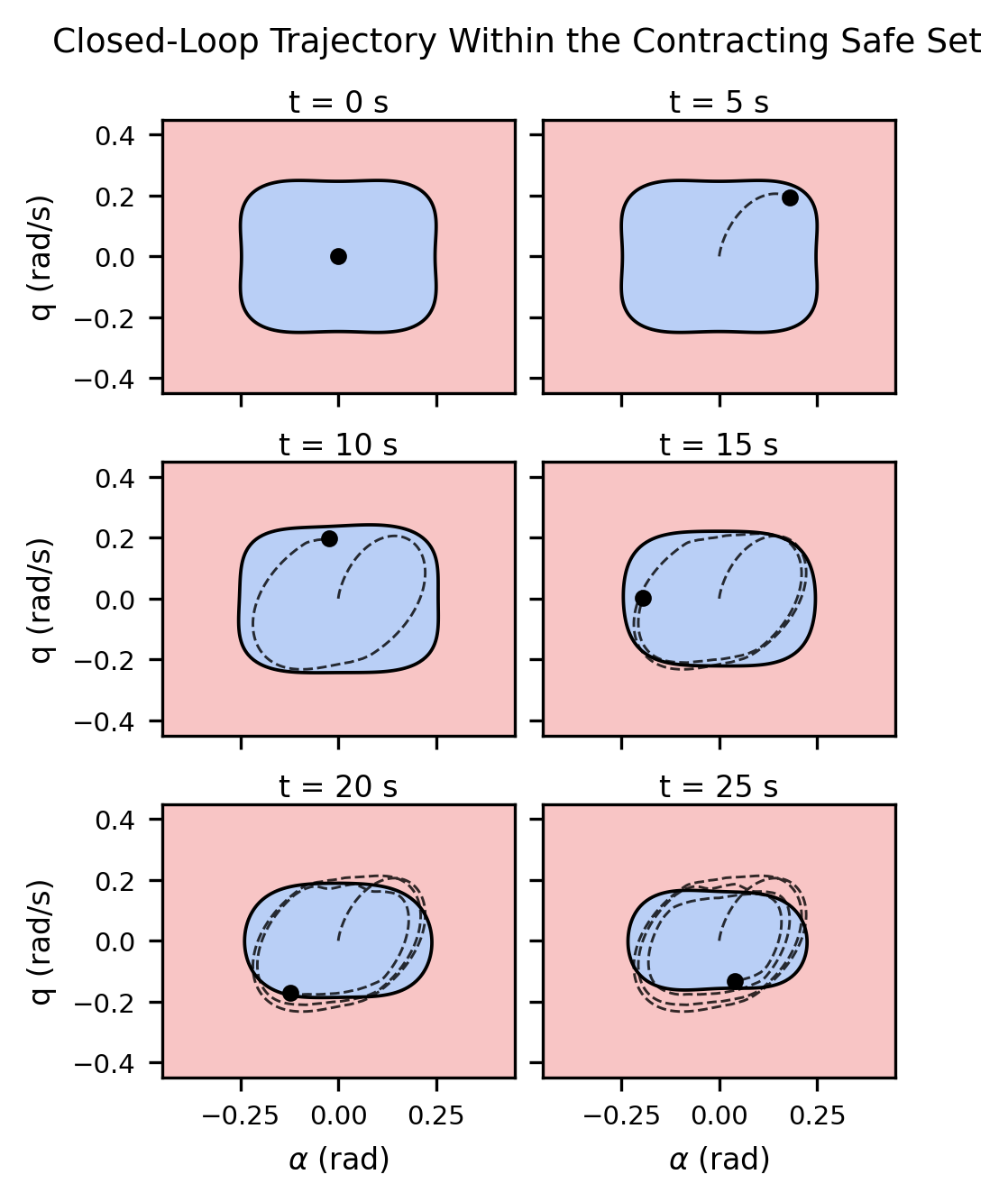}
    \end{center}
    \caption{Closed-loop trajectory under the proposed time-varying filter overlaid on the shrinking safe set at selected time. The dot marks the current state, and the dashed line denotes the past trajectory.}
    \label{fig:res1b}
\end{figure}
Figs.~\ref{fig:res2} and~\ref{fig:res3} compare the three configurations. Without a safety filter, the sinusoidal tracking command drives the state outside the contracted safe envelope ($\min_t h < 0$). The time-invariant filter enforces only the nominal barrier and therefore also violates the degraded envelope as $\lambda(t)$ decreases. In contrast, the proposed filter maintains $h_H(x(t),t)\ge 0$ for all $t$, and the QP remains feasible throughout the simulation. As shown in Fig.~\ref{fig:res3}, the intervention signal $u^\star-u_{\mathrm{nom}}$ is zero whenever the nominal command is safe, confirming the minimally invasive behavior of the safety filter in Section~\ref{subsec:qp_safety_filter}.
\begin{figure}[h]
    \centering
    \includegraphics[width=0.65\linewidth]{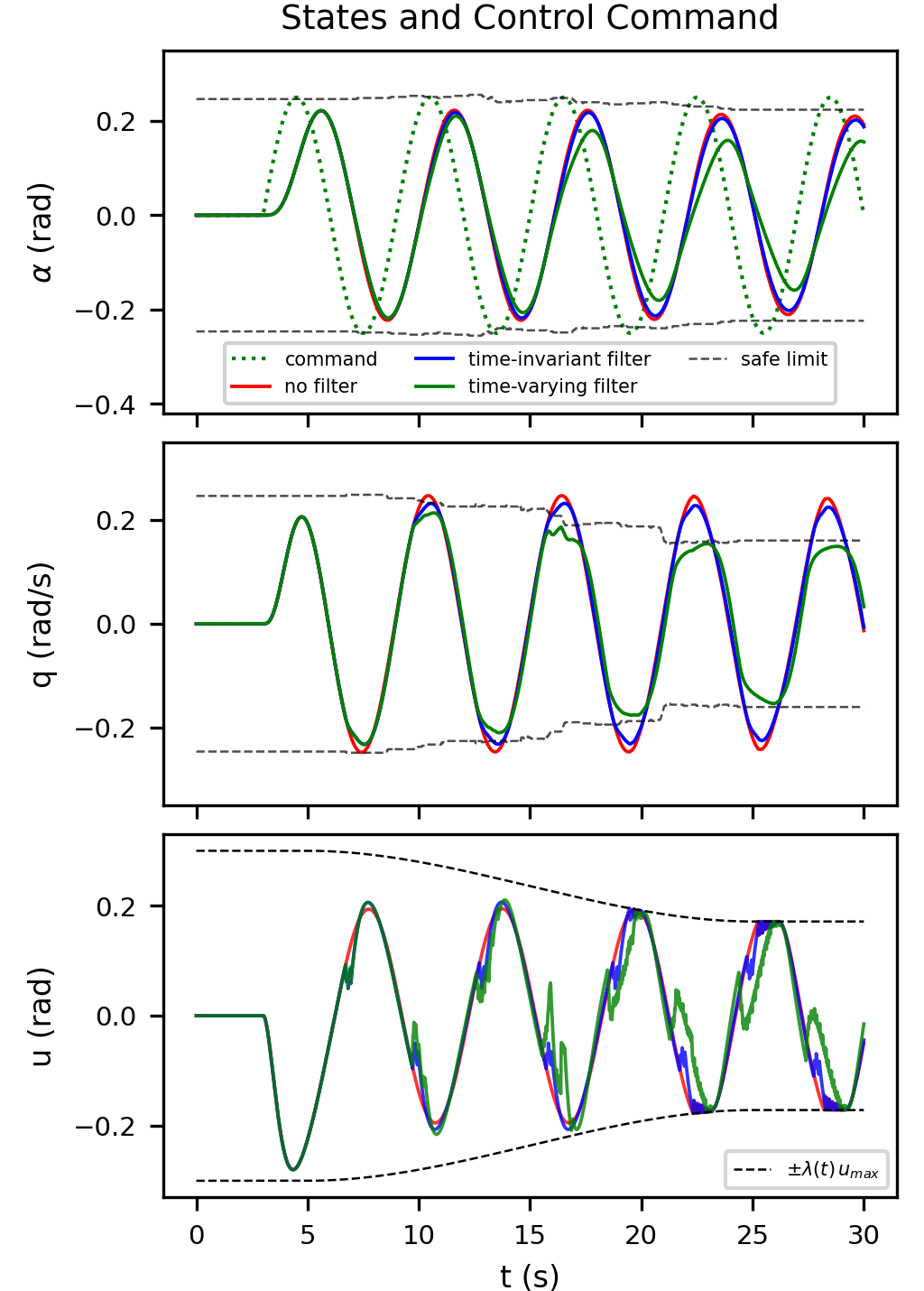}
    \caption{State trajectories and elevator command under smooth actuator degradation. Dashed lines indicate the contracting safe limits and degraded input bound.}
    \label{fig:res2}
\end{figure}
\begin{figure}[h]
    \centering
    \includegraphics[width=0.65\linewidth]{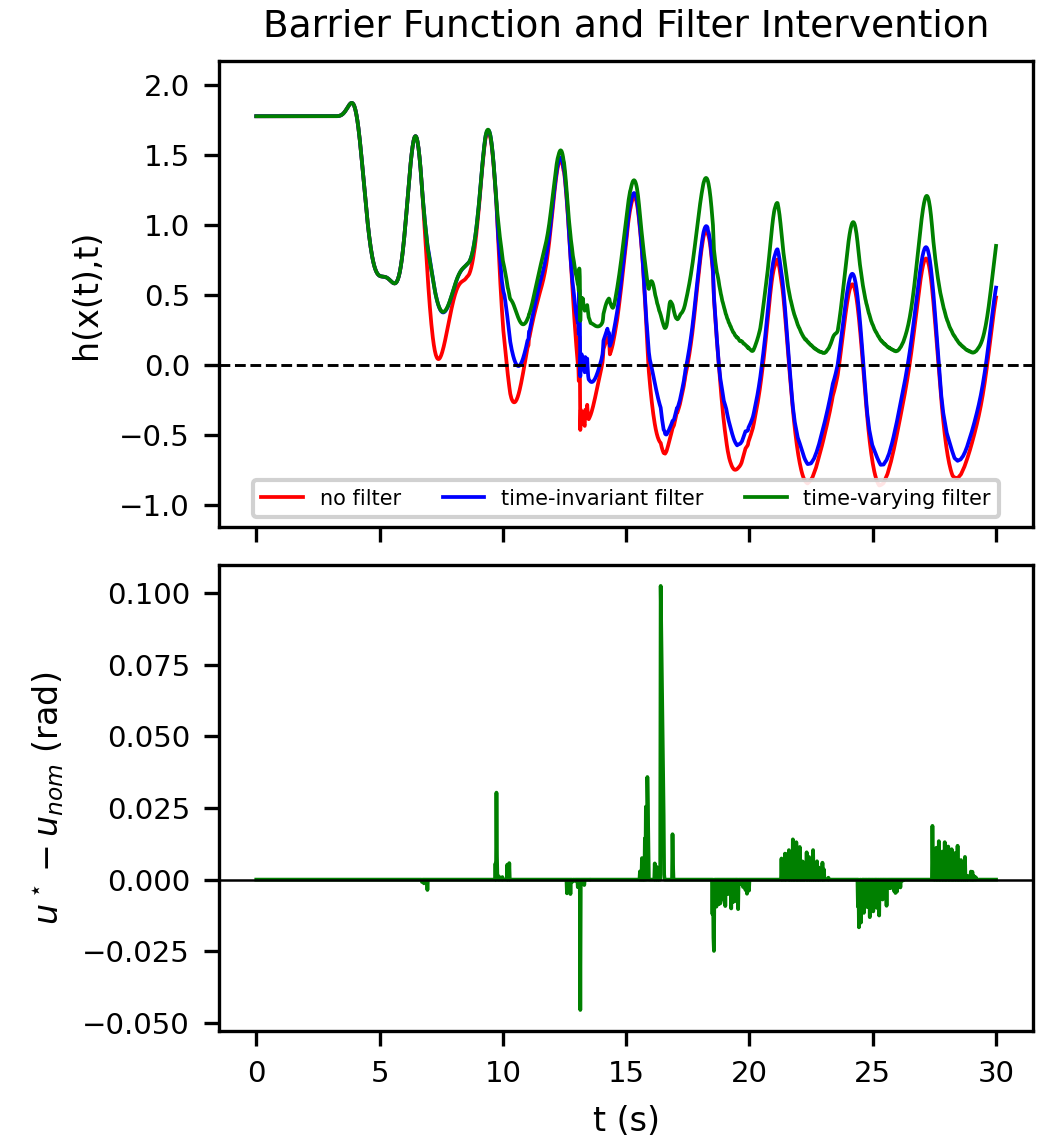}
    \caption{Barrier function and filter intervention. The proposed filter maintains $h_H(x(t),t)\ge0$ while intervening only when necessary.}
    \label{fig:res3}
\end{figure} 
\subsection{Case 2: Homotopy-Based Barrier Transition Under Abrupt Safe-Set Contraction}
To evaluate the homotopy mechanism under a worst-case representation switch, the initial and final learned sets from Case~1 are stored. At $t=5$~s, the barrier representation is switched instantaneously from the nominal set to the fully contracted set, while $\lambda$ simultaneously steps from $1$ to $0.571$. Two configurations are compared: an instantaneous barrier swap (without homotopy) and the homotopy-smoothed transition \eqref{eq:hH_homotopy_transition} with $T^{\mathrm{tr}}=1.0$~s. Figs.~\ref{fig:case2_states} and~\ref{fig:case2_barrier} summarize the results. Under the instantaneous swap, the enforced barrier jumps by $1.30$ at the switching instant and reaches $h=-0.63$. The state is therefore instantaneously outside the new safe set, violating the initialization hypothesis of Theorem~\ref{thm:forward_invariance_tvcbf}. Consequently, the swapped barrier is not a valid CBF at the switching instant. The enforced state limits also jump discontinuously, and the commanded input exhibits a discontinuity of $0.136$~rad. In contrast, the homotopy maintains a continuous barrier with $|\Delta h|\le10^{-3}$ across the switch and satisfies $h_H\ge0$ throughout. The transition term $\dot\eta_\ell(h^+-h^-)$ in \eqref{eq:hH_time_derivative_transition} tightens the CBF constraint during the blend, commanding an early corrective input
within the degraded limits. Although this produces a brief input peak, the control command remains admissible, and the maximum instantaneous input jump is reduced from $0.136$ to $0.055$~rad.
\begin{figure}[h]
    \centering
    \includegraphics[width=0.7\linewidth]{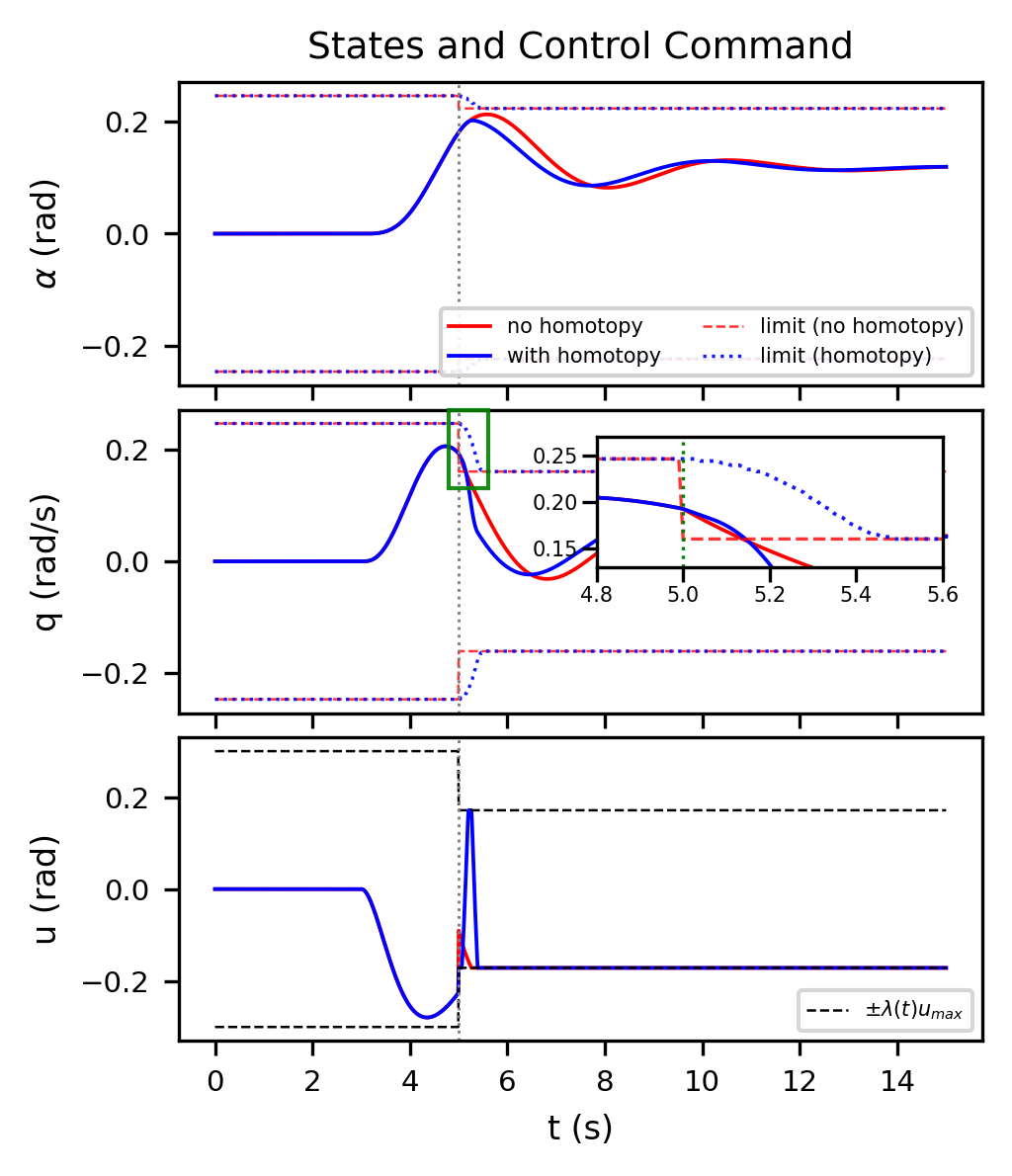}
    \caption{State trajectories and elevator command under an instantaneous barrier switch at $t=5$~s, with and without the homotopy transition. Dashed and dotted lines denote the enforced state limits.}
    \label{fig:case2_states}
\end{figure}
\begin{figure}[h]
    \centering
    \includegraphics[width=0.7\linewidth]{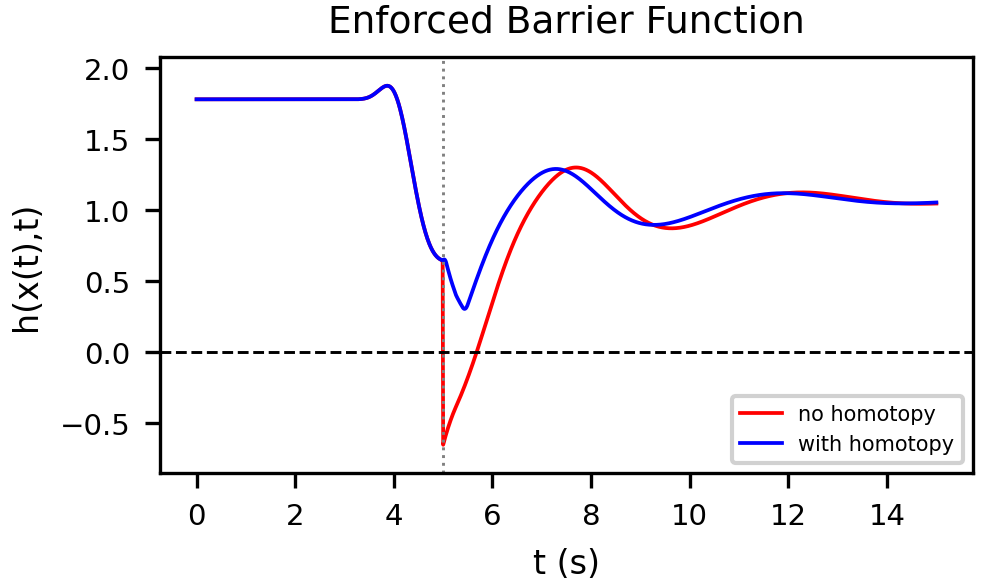}
    \caption{Enforced barrier function evaluation. The instantaneous swap produces a jump to $h<0$, while the homotopy keeps the barrier continuous and nonnegative.}
    \label{fig:case2_barrier}
\end{figure}
Table~\ref{tab:res1} shows the effect of the transition duration on QP feasibility. For small $T^{\mathrm{tr}}$, the transition term exceeds the available control authority under $\lambda u_{\max}$, causing the QP to become infeasible. Increasing $T^{\mathrm{tr}}$ restores feasibility. These results validate the bound in \eqref{eq:transition_speed_bound} and illustrate the trade-off discussed in Section~\ref{sec:safety_filter_design}.
\begin{table}[h]
\centering
\caption{Effect of transition duration on QP feasibility (Case 2).}
\label{tab:res1}
\begin{tabular}{ccc}
\hline
$T^{\mathrm{tr}}$ (s) & Infeasible steps & $\max_t|\Delta u|$ (rad) \\
\hline
0.05 & 4  & 0.3429 \\
0.20 & 14 & 0.1581 \\
0.50 & 7  & 0.055 \\
1.00 & 0  & 0.055 \\
2.00 & 0  & 0.055 \\
\hline
\end{tabular}
\end{table}
In summary, the simulations validate the proposed framework. Case 1 shows that the continuous-time decremental update shrinks the learned safe envelope as the actuator degrades, and that the resulting time-varying SVM-CBF-QP filter maintains $h_H(x(t),t)\ge0$ with a feasible QP throughout the simulation. On the other hand, the unfiltered and time-invariant configurations violate the degraded envelope. Case 2 demonstrates that an instantaneous barrier swap fails to satisfy the initialization hypothesis of Theorem~\ref{thm:forward_invariance_tvcbf}, whereas the homotopy transition preserves barrier continuity and reduces the instantaneous input jump. Finally, the transition-duration study confirms the feasibility trade-off predicted by \eqref{eq:transition_speed_bound}, showing that sufficiently large $T^{\mathrm{tr}}$ prevents infeasibility during active-set transitions.

\section{Conclusion and Future Works}
This work developed a degradation-aware, time-varying SVM-CBF-QP safety-filter framework. The nominal safe operating envelope was learned from operational data using an RBF-kernel SVM, and its decision function was used as the initial control barrier function. As the system degrades, the barrier is updated in real time with an online continuous-time decremental SVM update law, in which the support vector coefficients change according to a degradation scheduling signal while satisfying the active KKT conditions. This ensured that the barrier remained continuously differentiable between active-set transitions. To address the regularization issue introduced by the active-set changes, the paper introduces a homotopy-based smoothing method that yields a continuous-time-varying CBF. The learned CBF was then enforced with a QP-based safety filter subject to the degraded input constraints. The analysis established forward invariance of the homotopy-smoothed safe set together with recursive feasibility of the safety filter. Simulation studies using a VTOL short-period model showed that the proposed approach maintained safety even with a loss of elevator authority due to degradation. In comparison, the unfiltered system and the QP safety filter enforcing the initial safe set exceeded the degraded safe envelope. Also, the simulations showed that the homotopy transition preserved barrier continuity and removed input discontinuities during rapid contractions of the safe set, in agreement with the theoretical feasibility analysis. Future work will focus on extending the proposed framework to more advanced applications. The method will first be evaluated on full nonlinear, higher-dimensional vehicle models to assess its scalability and performance. It will then be validated through hardware-in-the-loop and flight experiments. Finally, the framework will be integrated with a real-time health-monitoring system so that the degradation scheduling signal is generated from estimated actuator health instead of a predefined degradation profile.

\section*{Acknowledgements}

This work has been supported by the NASA University Leadership Initiative under grant No. 80NSSC25M7104.

\bibliographystyle{IEEEtran}
\bibliography{shawon}

\end{document}